\newcommand{\G}{{\cal G}}
\newcommand{\Q}{\mathbb{Q}}
\newcommand{\Qpos}{\mathbb{Q}_{>0}}
\newcommand{\Nat}{\mbox{I$\!$N}}
\newcommand{\Real}{\mbox{I$\!$R}}
\newcommand{\SThresh}{\textsc{SThr}}
\newcommand{\PO}{Player~$1$\xspace}
\newcommand{\PT}{Player~$2$\xspace}
\newcommand{\PLi}{Player~$i$\xspace}
\newcommand{\stam}[1]{}
\newcommand{\zug}[1]{\langle #1  \rangle}
\newcommand{\set}[1]{\{ #1 \}}
\newcommand{\WnR}{\text{WnR}}
\newcommand{\deps}{d_{\epsilon}}
\newtheorem{theorem}{Theorem}[section]
\newtheorem{lemma}[theorem]{Lemma}
\newtheorem{definition}{Definition}[section]
\theoremstyle{definition}
\newtheorem{remark}{Remark}[section]
\newtheorem{example}{Example}[section]
\title{All-Pay Bidding Games on Graphs}
\author{Guy Avni\textsuperscript{\rm 1}, Rasmus Ibsen-Jensen\textsuperscript{\rm 2}, and Josef Tkadlec\textsuperscript{\rm 1}\\
\textsuperscript{\rm 1}IST Austria\\
\textsuperscript{\rm 2} Liverpool University}
\Large \textbf{Sunil Issar, J. Scott Penberthy, George Ferguson, Hans Guesgen}\\ 
\begin{document}
\maketitle
\begin{abstract}
In this paper we introduce and study {\em all-pay bidding games}, a class of two player, zero-sum games on graphs. The game proceeds as follows. We place a token on some vertex in the graph and assign budgets to the two players. Each turn, each player submits a sealed legal bid (non-negative and below their remaining budget), which is deducted from their budget and the highest bidder moves the token onto an adjacent vertex. The game ends once a sink is reached, and \PO pays \PT the outcome that is associated with the sink. The players attempt to maximize their expected outcome. Our games model settings where effort (of no inherent value) needs to be invested in an ongoing and stateful manner. On the negative side, we show that even in simple games on DAGs, optimal strategies may require a distribution over bids with infinite support. A central quantity in bidding games is the {\em ratio} of the players budgets. On the positive side, we show a simple FPTAS for DAGs, that, for each budget ratio, outputs an approximation for the optimal strategy for that ratio. We also implement it, show that it performs well, and suggests interesting properties of these games. Then, given an outcome $c$, we show an algorithm for finding the necessary and sufficient initial ratio for guaranteeing outcome $c$ with probability~$1$ and a strategy ensuring such. Finally, while the general case has not previously been studied, solving the specific game in which \PO wins iff he wins the first two auctions, has been long stated as an open question, which we solve.
\end{abstract}

\section{Introduction}
Two-player {\em graph games} naturally model settings in which decision making is carried out dynamically. Vertices model the possible configurations and edges model actions. The game proceeds by placing a token on one of the vertices and allowing the players to repeatedly move it. One player models the protagonist for which we are interested in finding an optimal decision-making strategy, and the other player, the antagonist, models, in an adversarial manner, the other elements of the systems on which we have no control. 

We focus on {\em quantitative reachability} games \cite{Ev55} in which the graph has a collection of sink vertices, which we call the {\em leaves}, each of which is associated with a weight. The game is a zero-sum game; it ends once a leaf is reached and the weight of the leaf is \PO's reward and \PT's cost, thus \PO aims at maximizing the weight while \PT aims at minimizing it. A special case is {\em qualitative reachability} games in which each \PLi has a {\em target} $t_i$ and \PLi {\em \/ wins} iff the game ends in $t_i$.

A graph game is equipped with a mechanism that determines how the token is moved; e.g., in {\em turn-based} games the players alternate turns in moving the token. {\em Bidding} is a mode of moving in which in each turn, we hold an auction to determine which player moves the token. Bidding qualitative-reachability games where  studied in \cite{LLPU96,LLPSU99} largely with variants of first-price auctions: in each turn both players simultaneously submit bids, where a bid is legal if it does not exceed the available budget, the higher bidder moves the token, and pays his bid to the lower bidder in {\em Richman bidding} (named after David Richman), and to the bank in {\em poorman bidding}. The central quantity in these games is the {\em ratio} between the players' budgets. Each vertex is shown to have a {\em threshold ratio}, which is a necessary and sufficient initial ratio that guarantees winning the game. Moreover, optimal strategies are deterministic. 

We study, for the first time, quantitative-reachability {\em all-pay} bidding games, which are similar to the bidding rules above except that both players pay their bid to the bank. Formally, for $i \in \set{1,2}$, suppose that \PLi's budget is $B_i \in \Qpos$ and that his bid is $b_i \in [0,B_i]$, then the higher bidder moves the token and \PLi's  budget  is updated to $B_i-b_i$. Note that in variants of first-price auctions, assuming the winner bids $b$, the loser's budget is the same for any bid in $[0,b)$. In an all-pay auction, however, the higher the losing bid, the lower the loser's available budget is in the next round. Thus, intuitively, the loser would prefer to bid as close as possible to $0$. 

\begin{figure}[t]
\begin{minipage}[b]{0.4\linewidth}
\centering
\includegraphics[height=1.5cm]{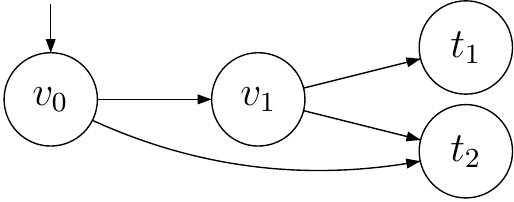}
\end{minipage}
\hspace{0.05\linewidth}
\begin{minipage}[b]{0.51\linewidth}
\centering
\includegraphics[height=3cm]{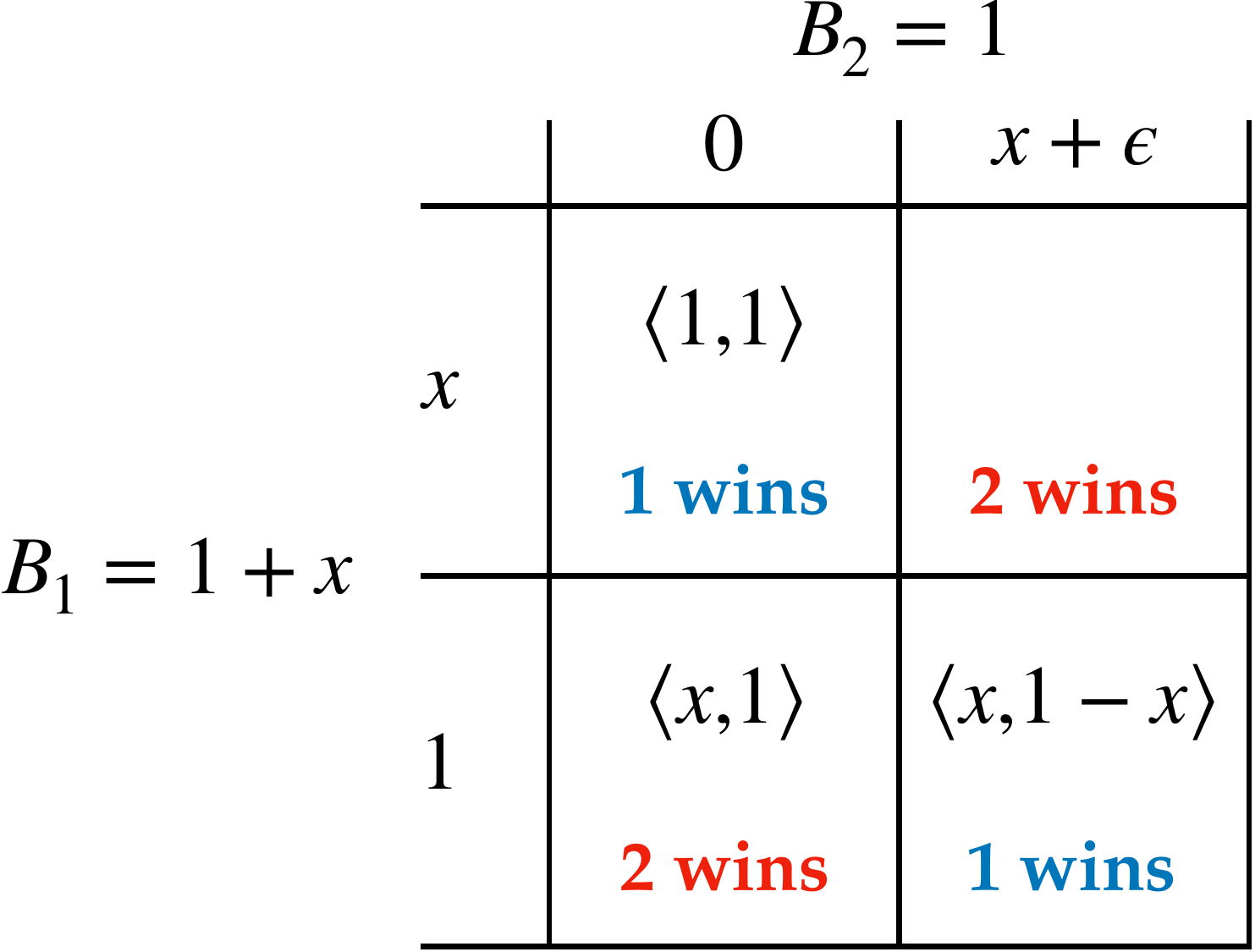}
\end{minipage}
\caption{On the left, the bidding game in which \PO wins iff he wins two biddings in a row. Assume initial budgets of $B_1 = 1+x$, for $x \in (0.5,1)$, and $B_2 = 1$. A \PO strategy that guarantees value $0.5$ uniformly bids $\set{x, 1}$ in the first bidding. \PT has two optimal counter strategies, and the table on the right depicts the budget in the second bidding in all cases.}
\label{fig:2-1}
\end{figure}

\stam{
\begin{figure}[ht]
\center
\includegraphics[height=1.5cm]{2-1.pdf}
\caption{A bidding game in which \PO wins iff the game ends in $t_1$, i.e., he wins two biddings in a row.}
\label{fig:2-1}
\end{figure}
}

\begin{example}
\label{ex:2-1}
Consider the qualitative reachability game that is depicted in Fig.~\ref{fig:2-1}, which we call {\em win twice in a row} or $\WnR(2)$, for short. For convenience, fix \PT's initial budget to be $1$. The solution to the game using a first-price auction is trivial: for example, with poorman bidding (in which the winner pays the bank), \PO wins iff his budget exceeds $2$. Indeed, if his budget is $2+\epsilon$, he bids $1+\epsilon/2$ in the first bidding, moves the token to $v_1$, and, since in first-price auctions, the loser does not pay his bid, the budgets in the next round are $1+\epsilon/2$ and $1$, respectively. \PO again bids $1+\epsilon/2$, wins the bidding, moves the token to $t_1$, and wins the game. On the other hand, if \PO's budget is $2-\epsilon$, \PT can guarantee that the token reaches $t_2$ by bidding $1$ in both rounds. 

The solution to this game with all-pay bidding is much more complicated and was posed as an open question in \cite{LLPSU99}, which we completely solve. Assuming \PT's budget is $1$, it is easy to show that when \PO's initial budget is either greater than $2$ or smaller than $1$, there exist a deterministic winning strategy for one of the players. Also, for budgets in between, i.e, in $(1, 2)$, it is not hard to see that optimal strategies require probabilistic choices, which is the source of the difficulty and an immediate difference from first-price bidding rules. We characterize the {\em value} of the game, which is the optimal probability that \PO can guarantee winning, as a function of \PO's initial budget. In Theorem~\ref{thm:2-1}, we show that for $n \in \Nat$ and $x \in (1/(n+1), 1/n]$, the value is $1/(n+1)$ when \PO's  initial budget is $1+x$ and \PT's initial budget is $1$. Fig.~\ref{fig:2-1} gives a flavor of the the solution in the simplest interesting case of $x \in (0.5, 1)$, where a \PO strategy that bids $x$ and $1$ each with probability $0.5$ guarantees winning with probability $0.5$.\hfill\qed
\end{example}

Apart from the theoretical interest in all-pay bidding games, we argue that they are useful in practice, and that they address limitations of previously studied models. All-pay auctions are one of the most well-studied auction mechanisms \cite{BKV96}. Even though they are described in economic terms, they are often used to model settings in which the agents' bids represent the amount of effort they invest in a task such as in rent seeking \cite{Tul80}, patent races, e.g., \cite{BH03}, or biological auctions \cite{CRN12}. As argued in \cite{KK09}, however, many decision-making settings, including the examples above, are not one-shot in nature, rather they develop dynamically. Dynamic all-pay auctions have been used to analyze, for example, political campaigning in the USA \cite{KP06}, patent races \cite{HV85}, and \cite{KP06} argue that they appropriately model sport competitions between two teams such as ``best of $7$'' in the NBA playoffs. An inherent difference between all-pay repeated auctions and all-pay bidding games is that our model assumes that the players' effort has no or negligible inherent value and that it is bounded. The payoff is obtained only from the reward in the leaves. For example, a ``best of $k$'' sport competition between two sport teams can be modelled as an all-pay bidding game as follows. A team's budget models the sum of players' strengths. A larger budget represents a fresher team with a deeper bench. Each bidding represents a match between the teams, and the team that bids higher wins the match. The teams only care about winning the tournament and the players' strengths have no value after the tournament is over. 

The closest model in spirit is called {\em Colonel Blotto games}, which dates back to \cite{Bor21}, and has been extensively studied since. In these games, two colonels own armies and compete in $n$ battlefields. Colonel Blotto is a one-shot game: on the day of the battle, the two colonels need to decide how to distribute their armies between the $n$ battlefields, where each battlefield entails a reward to its winner, and in each battlefield, the outnumbering army wins. To the best of our knowledge, all-pay bidding games are the first to incorporate a modelling of bounded resource with no value for the players, as in Colonel Blotto games, with a dynamic behavior, as in ongoing auctions.

Graph games have been extensively used to model and reason about systems \cite{handbookMC} and multi-agent systems \cite{WG+16}. Bidding games naturally model systems in which the scheduler accepts payment in exchange for priority. Blockchain technology is one such system, where the {\em miners} accept payment and have freedom to decide the order of blocks they process based on the proposed transaction fees. Transaction fees are not refundable, thus all-pay bidding is the most appropriate modelling. Manipulating transaction fees is possible and can have dramatic consequences: such a manipulation was used to win a popular game on Ethereum called FOMO3d\footnote{\url{https://bit.ly/2wizwjj}}. There is thus ample motivation for reasoning and verifying blockchain technology \cite{CGV18}.

We show that all-pay bidding games exhibit an involved and interesting mathematical structure. As discussed in Example~\ref{ex:2-1}, while we show a complete characterization of the value function for the game $\WnR(2)$, it is significantly harder than the characterization with first-price bidding. The situation becomes worse when we slightly complicate the game and require that \PO wins three times in a row, called $\WnR(3)$, for short. We show that there are initial budgets for which an optimal strategy in $\WnR(3)$ requires infinite support. 

We turn to describe our positive results on general games. First, we study {\em surely-winning} in all-pay bidding games, i.e., winning with probability $1$. We show a threshold behavior that is similar to first-price bidding games: each vertex $v$ in a qualitative all-pay bidding game has a surely-winning threshold ratio, denoted $\SThresh(v)$, such that if \PO's ratio exceeds $\SThresh(v)$, he can guarantee winning the game with probability $1$, and if \PO's ratio is less than $\SThresh(v)$, \PT can guarantee winning with positive probability. Moreover, we show that surely-winning threshold ratios have the following structure: for every vertex $v$, we have $\SThresh(v) = \SThresh(v^-) + (1-\SThresh(v^-)/\SThresh(v^+))$, where $v^-$ and $v^+$ are the neighbors of $v$ that, respectively, minimize and maximize $\SThresh$. This result has computation-complexity implications; namely, we show that in general, the decision-problem counterpart of finding surely-winning threshold ratios is in PSPACE using the {\em existential theory of the reals} (ETR, for short) \cite{Can88}, and it is in linear time for DAGs. We show that surely-winning threshold ratios can be irrational, thus we conjecture that the decision problem is {\em sum-of-squareroot-hard}.

\begin{example}
\label{ex:tic-tac-toe}
Tic-tac-toe is a canonical graph game that is played on a DAG. First-price bidding Tic-tac-toe was discussed in a blog post\footnote{\url{https://bit.ly/2KUong4}}, where threshold budgets are shown to be surprisingly low: with Richman bidding, \PO can guarantee winning when his ratio exceeds $133/123 \approx 1.0183$ \cite{DP10} and with poorman-bidding, when it exceeds roughly $1.0184$. 

We implement our algorithm and find surely-winning threshold ratios in all-pay bidding tic-tac-toe: \PO surely wins when his initial ratio is greater than $51/31 \approx 1.65$ (see Fig.~\ref{fig:tic-tac-toe}). We point to several interesting phenomena. One explanation for the significant gap between the thresholds in all-pay and first-price bidding is that, unlike Richman and poorman bidding, in the range $(31/51, 51/31)$ neither player surely wins. Second, the threshold ratio for the relaxed \PO goal of surely not-losing equals the surely-winning threshold ratio. This is not always the case; e.g., from the configuration with a single $O$ in the middle left position, \PO requires a budget of $31/16$ for surely winning and $25/13$ for surely not-losing. Third, we find it surprising that when \PT wins the first bidding, it is preferable to set an $O$ in one of the corners (as in configuration $c_2$) rather than in the center.\hfill\qed
\begin{figure}[ht]
\center
\includegraphics[width=\linewidth]{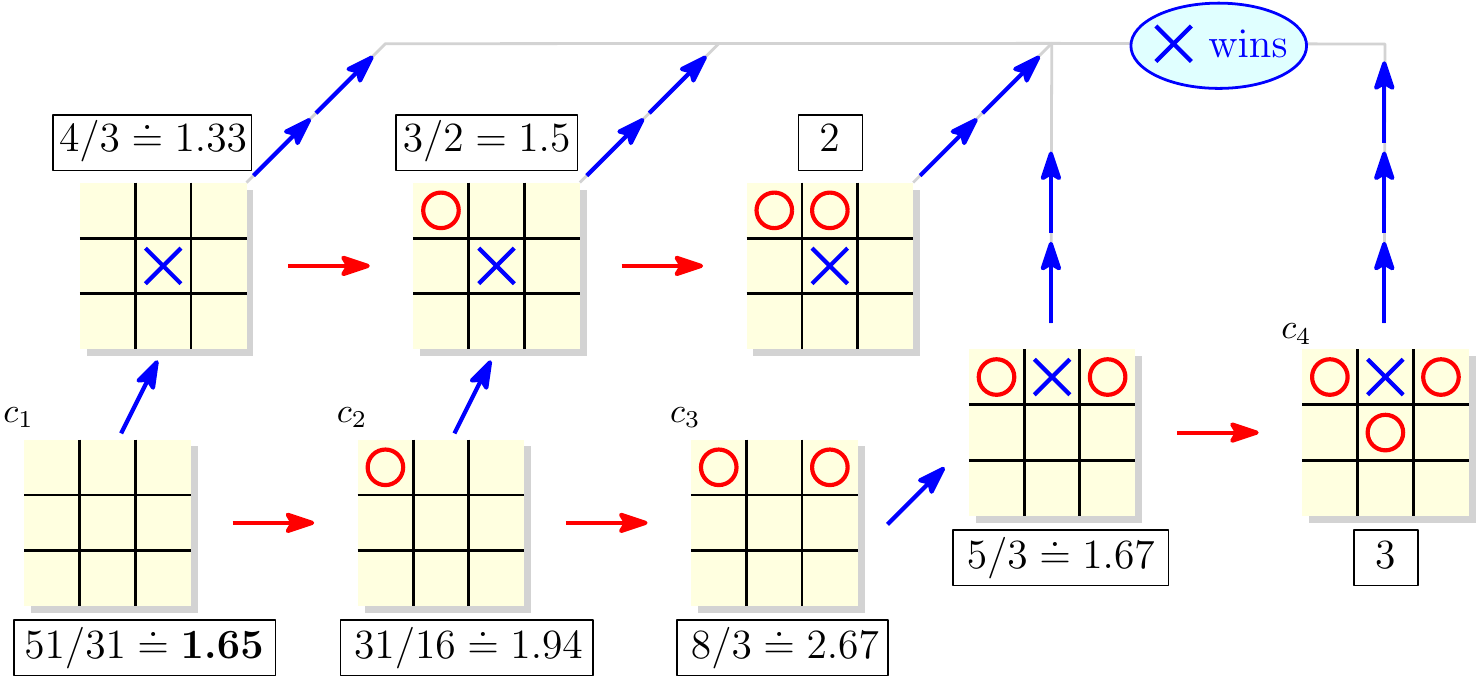}
\caption{Surely-winning thresholds in tic-tac-toe. For example, the surely-winning threshold budget in $c_{4}$ is $3$ since in order to win the game, \PO must win three biddings in a row.}
\label{fig:tic-tac-toe}
\end{figure}
\end{example}

Finally, we devise an FPTAS for the problem of finding values in DAGs; namely, given a bidding game $\G$ that is played on a DAG, an initial vertex, and an $\epsilon > 0$, we find, in time polynomial in the size of $\G$ and $1/\epsilon$, an upper- and lower-bound on the expected payoff that \PO can guarantee with every initial ratio of the form $\epsilon \cdot k$. The idea is to discretize the budgets and bids and, using a bottom-up approach, repeatedly solve finite-action zero-sum games. The algorithm gives theoretical bounds on the approximation. It is a simple algorithm that we have implemented and experimented with. Our experiments show that the difference between upper- and lower-bounds is small, thus we conclude that the algorithm supplies a good approximation to the value function. The experiments verify our theoretical findings. In addition, they hint at interesting behavior of all-pay bidding games, which we do not yet understand and believe encourages a further study of this model.

\subsection{Related work}
\label{sec:rel}
\stam{
.... In {\em rent seeking} \cite{Tul80} effort is invested to increase the chance of gaining wealth as exhibited, for example, by companies lobbying for favorable legislation. In patent races, e.g., \cite{BH03}, R\&D companies invest resources in research with the hope of increasing their probability of being the first to gain a patent. In biological auctions \cite{CRN12}, physical resources are used to gain advantage, e.g., a plant might consume water to grow taller and gain better access to the sun. While all these models consist of a single auction, namely they are {\em one-shot} games, many decision-making settings have dynamic aspects. To accommodate dynamic behavior, repeated all-pay actions were studied, for example, in \cite{KK09}. As in that paper, we argue that dynamic behavior more appropriately models the examples above. Repeated auctions have been used to analyze political campaigning in the USA \cite{KP06}. In \cite{HV85}, a game that is called a {\em race} is used to model patent races: for $i \in \set{1,2}$, a prize of value $z_i$ is given to \PLi if he can win $n_i$ auctions before the other player wins $n_{3-i}$ auctions. Races can model the investment of strength in sport tournaments between two teams, e.g., as in the NBA playoffs, where the team that advances is the first to win $k$ matches and the players in the teams  (the stars in particular) have limited strength and need to rest. We stress the point that our model is model general than races; e.g., in configuration $c_{3}$ in Fig.~\ref{fig:tic-tac-toe}, \PT wins if he wins the next bidding, but if \PO wins the bidding, \PT needs to win in at least two biddings. We will come back to races in our experimental section. 

An inherent difference between the models mentioned above and our model is that in our model, the players only care about the reward in the leaves and the players' payments in the biddings throughout the game have no effect on the payoff. In the models above, a player's payoff is the obtained reward minus the invested effort, which makes the model and the expected values much simpler; namely, one player expects a payoff of $0$ and the other his valuation for winning. We argue that our model is appropriate when a player's resources will be consumed whether the reward is achieved or not. For example, when effort models working time in patent races or grant writing or when effort represents strength in sport tournaments, the resources are bounded, will be consumed, and have no value, whereas the only value comes from the reward of winning. The closest model in spirit is called 
}

Colonel Blotto games \cite{Bor21} have been extensively studied; a handful of papers include \cite{Bel69,Bla54,Har08,SW81}. They have mostly been studied in the discrete case, i.e., the armies are given as individual soldiers, but, closer to our model, also in the continuous case (see \cite{Rob06} and references therein). The most well-studied objective is maximizing the expected payoff, though recently \cite{BB+19} study the objective of maximizing the probability of winning at least $k$ battlefields, which is closer to our model. 

All-pay bidding games were mentioned briefly in \cite{LLPSU99}, where it was observed that optimal strategies require probabilistic choices. To the best of our knowledge, all-pay bidding games \cite{MWX15} were studied only with discrete-bidding, which significantly simplifies the model, and in the Richman setting; namely, both players pay their bids to the other player. 

First-price bidding games have been well-studied and exhibit interesting mathematical properties. Threshold ratios in reachability Richman-bidding games, and only with these bidding rules, equal values in {\em random-turn based} games \cite{PSSW09}, which are stochastic games in which in each round, the player who moves is chosen according to a coin toss. This probabilistic connection was extended and generalized to infinite-duration bidding games with Richman- \cite{AHC19}, poorman- \cite{AHI18}, and even {\em taxman-bidding} \cite{AHZ19}, which generalizes both bidding rules. Other orthogonal extensions of the basic model include non-zero-sum bidding games \cite{MKT18} and discrete-bidding games that restrict the granularity of the bids \cite{DP10,AAH19}.

There are a number of shallow similarities between all-pay games and {\em concurrent stochastic games} \cite{Sha53}. A key difference between the models is that in all-pay bidding games, the (upper and lower) value depends on the initial ratio, whereas a stochastic game has one value. We list examples of similarities. In both models players pick actions simultaneously in each turn, and strategies require randomness and infinite memory though in {\em Everett recursive games} \cite{Ev55}, which are closest to our model, only finite memory is required and in more general stochastic games, the infinite memory requirement comes from remembering the history e.g. \cite{MN81} whereas in all-pay bidding games infinite support is already required in the game ``win 3 times in a row'' (which has histories of length at most $3$). Also, computing the value in stochastic games is in PSPACE using ETR  \cite{EKVY08}, it is in P for DAGs, and there are better results for solving the guaranteed-winning case \cite{CI15}.



\section{Preliminaries}
A reachability all-pay bidding game is $\zug{V, E, L, w}$, where $V$ is a finite set of vertices, $E \subseteq V \times V$ is a set of directed edges, $L \subseteq V$ is a set of {\em leaves} with no outgoing edges, and $w: L \rightarrow \Q$ assigns unique weights to leaves, i.e., for $l,l' \in L$, we have $w(l) \neq w(l')$. We require that every vertex in $V \setminus L$ has a path to at least two different leaves. A special case is {\em qualitative} games, where there are exactly two leaves with weights in $\set{0,1}$. We say that a game is played on a DAG when there are no cycles in the graph. For $v \in V$, we denote the {\em neighbors} of $v$ as $N(v) = \set{u \in V: \zug{v,u} \in E}$. 

A {\em strategy} is a recipe for how to play a game. It is a function that, given a finite {\em history} of the game, prescribes to a player which {\em action} to take,  where we define these two notions below. A history in a bidding game is $\pi = \zug{v_1, b^1_1, b^2_1}, \ldots, \zug{v_k, b^1_k, b^2_k}, v_{k+1} \in (V \times \Real \times \Real)^*\cdot V$, where for $1 \leq j \leq k+1$, the token is placed on vertex $v_j$ at round $j$ and \PLi's bid is $b^i_j$, for $i \in \set{1,2}$. Let $B^I_i$ be the initial budget of \PLi. \PLi's budget following $\pi$ is $B_i(\pi) = B^I_i - \sum_{1 \leq j \leq k} b^i_j$.  A play $\pi$ that ends in a leaf $l \in L$ is associated with the payoff $w(l)$. 

Consider a history $\pi$ that ends in $v \in V \setminus L$. The set of {\em legal} actions following $\pi$, denoted $A(\pi)$, consists of pairs $\zug{b, u}$, where $b \leq B_i(\pi)$ is a bid that does not exceed the available budget and $u \in N(v)$ is a vertex to move to upon winning. A {\em mixed} strategy is a function that takes $\pi$ and assigns a probability distribution over $A(\pi)$. The support of a strategy $f$ is $supp(f, \pi) = \set{\zug{b,u}: f(\pi)(\zug{v,u}) > 0}$. We assume WLog. that each bid is associated with one vertex to proceed to upon winning, thus if $\zug{b, v}, \zug{b, v'} \in supp(f, \pi)$, then $v=v'$. We say that $f$ is {\em pure} if, intuitively, it does not make probabilistic choices, thus for every history $\pi$, we have $|supp(f, \pi)|=1$. 

\begin{definition}[Budget Ratio]
Let $B_1, B_2 \in \Real$ be initial budgets for the two players. \PO's {\em ratio} is $B_1/B_2$.\footnote{We find this definition more convenient than $B_1/(B_1 + B_2)$ which is used in previous papers on bidding games.}
\end{definition}

Let $\G$ be an all-pay bidding game. An initial vertex $v_0$, an initial ratio $r \in \Real$, and two strategies $f_1$ and $f_2$ for the two players give rise to a probability $D(v_0, r, f_1, f_2)$ over plays, which is defined inductively as follows. The probability of the play of length $0$ is $1$. Let $\pi = \pi' \cdot \zug{v, b_1, b_2}, u$, where $\pi'$ ends in a vertex $v$. Then, we define $\Pr[\pi] = \Pr[\pi'] \cdot f_1(\pi)(b_1) \cdot f_2(\pi)(b_2)$. Moreover, for $i \in \set{1,2}$, assuming \PLi chooses the successor vertex $v_i$, i.e., $\zug{b_i, v_i} \in supp(f_i, \pi')$, then $u = v_1$ when $b_1 \geq b_2$ and otherwise $u=v_2$. That is, we resolve ties by giving \PO the advantage. This choice is arbitrary and does not affect most of our results, and the only affect is discussed in Remark~\ref{rem:tie}.

\begin{definition}[Game Values]
The {\em lower value} in an all-pay game $\G$ w.r.t. an initial vertex $v_0$, and an initial ratio $r$ is $val^\downarrow(\G, v_0, r) = \sup_f \inf_g \int_{\pi \in D(v_0, r, f, g)} \Pr[\pi] \cdot pay(\pi)$. The {\em upper value}, denoted $val^\uparrow(\G, v_0, r)$ is defined dually. It is always the case that $val^\downarrow(\G, v_0, r) \leq val^\uparrow(\G, v_0, r)$, and when $val^\uparrow(\G, v_0, r) = val^\downarrow(\G, v_0, r)$, we say that the value exists and we denote it by $val(\G, v_0, r)$. 
\end{definition}

\section{Surely-Winning Thresholds}
In this section we study the existence and computation of a necessary and sufficient initial budget for \PO that guarantees {\em surely} winning, namely winning with probability $1$. We focus on qualitative games in which each player has a target leaf. Note that the corresponding question in quantitative games is the existence of a budget that suffices to surely guarantee a payoff of some $c \in \Real$, which reduces to the the surely-winning question on qualitative games by setting \PO's target to be the leaves with weight at least $c$. We define surely-winning threshold budgets formally as follows.

\begin{definition}[Surely-Winning Thresholds]
Consider a qualitative game $\G$ and a vertex $v$ in $\G$. The surely-winning threshold at $v$, denoted $\SThresh(v)$, is a budget ratio such that: 
\begin{itemize}
\item If \PO's ratio exceeds $\SThresh(v)$, he has a strategy that guarantees winning with probability $1$.
\item If \PO's ratio is less than $\SThresh(v)$, \PT has a strategy that guarantees winning with positive probability.
\end{itemize}
\end{definition}

To show existence of surely-winning threshold ratios, we define {\em threshold functions}, show their existence, and show that they coincide with surely-winning threshold ratios.

\begin{definition}[Threshold functions]
\label{def:thresh}
\stam{
Consider a qualitative game $\G = \zug{V, E, t_1, t_2}$. Let $T:V \rightarrow \Real_{\geq 0}$ be a function. We call $T$ a {\em threshold function} if $T(t_1) = 0$, $T(t_2) = 1$, and for $v \in V \setminus \set{t_1,t_2}$, we have $T(v) = T(v^-) + 1-T(v^-)/T(v^+)$, where $v^-, v^+ \in N(V)$ are such that $T(v^-) \leq T(u) \leq T(v^+)$, for every $u \in N(v)$.
}
Consider a qualitative game $\G = \zug{V, E, t_1, t_2}$. Let $T:V \rightarrow \Real_{\geq 0}$ be a function. For $v \in V \setminus \set{t_1, t_2}$, let $v^-, v^+ \in N(V)$ be such that $T(v^-) \leq T(u) \leq T(v^+)$, for every $u \in N(v)$. We call $T$ a {\em threshold function} if
\[ 
T(v) = \begin{cases}
0 & \text{ if } v=t_1 \\
\infty & \text{ if } v = t_2\\
T(v^-) + 1-T(v^-)/T(v^+) & \text{ otherwise.}\footnotemark
\end{cases}
\]
\end{definition}

\footnotetext{We define $1/\infty = 0$ and $c < \infty$, for every $c \in \Real$.}
We start by making observations on surely-winning threshold ratios and threshold functions. 
\begin{lemma}
\label{lem:surely-obs}
Consider a qualitative game $\G = \zug{V, E, t_1, t_2}$, and let $T$ be a threshold function.
\begin{itemize}
\item For $v \in V$, if \PO's initial ratio exceeds $\SThresh(v)$, then he has a pure strategy that guarantees winning. 
\item For $v \in V \setminus \set{t_1, t_2}$, we have $\SThresh(v) \geq 1$. 
\item We have $T(v^-) \leq T(v) \leq T(v^+)$ and the inequalities are strict when $T(v^-) \neq T(v^+)$.
\end{itemize}
\end{lemma}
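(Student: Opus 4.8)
The plan is to treat the three items essentially independently. For the third item I would simply unfold the defining recursion. Write $a=T(v^-)$ and $b=T(v^+)$; by the choice of $v^-,v^+$ we have $a\le b$, and then (with the conventions $c/\infty=0$ and $0/c=0$, and reading the $b=\infty$ case as $T(v)=a+1$) one gets $T(v)-T(v^-)=1-a/b=(b-a)/b\ge 0$, strict exactly when $a\neq b$, and $T(v^+)-T(v)=(b-1)(b-a)/b$, which is $\ge 0$ as soon as $b\ge 1$ and strict when in addition $a<b$. So the only nontrivial point is that $T(v^+)\ge 1$. I would derive this from the auxiliary claim that $T\ge 1$ on every internal vertex, which in turn follows by taking an internal vertex $u$ of minimal $T$-value and tracing a path from $u$ to $t_1$ (it exists by the ``path to two leaves'' hypothesis): a short case analysis on the recursion at $u$ — ruling out $u^+=t_2$ and $T(u)=0$, and showing that otherwise all internal neighbours of $u$ are again minimal — propagates along the path until it reaches a vertex whose minimal neighbour is $t_1$, which then has $T$-value exactly $1$, so $T(u)=1$. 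The same hypothesis rules out $v^+=t_1$ (a vertex all of whose neighbours are $t_1$ has no path to $t_2$), so indeed $T(v^+)\ge 1$.

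For the first item — the step I expect to require the most care — I would determinize a surely-winning strategy. Call a configuration (a vertex plus both current budgets) \emph{winning} if \PO has a mixed strategy from it reaching $t_1$ with probability $1$ against every \PT strategy; by the definition of $\SThresh(v)$, any ratio exceeding $\SThresh(v)$ lands \PO in a winning configuration. The crux is that from a winning configuration witnessed by a strategy $f$, \emph{every} action $\zug{b,u}$ in the support of $f$ there is safe: against every legal \PT reply $\zug{b',u'}$ the successor configuration is again winning. Indeed, if $b'\le b$ then \PO wins the bidding and moves to $u$, and that transition occurs with positive probability under $f$, so the successor inherits a winning witness; if $b'>b$ then \PO loses and \PT moves to $u'$, but since $b<b'$ the action $\zug{b,u}$ is one of the support actions that lose to $b'$, so the transition into $(u',\cdot,\cdot)$ again occurs with positive probability under $f$ and the successor is winning. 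Fixing one safe action per configuration yields a pure strategy whose every reachable configuration is winning; on a DAG every play is finite and terminates in a winning leaf, which must be $t_1$, so \PO wins. On general graphs one must additionally make the pure strategy terminate — e.g.\ choosing, among safe actions, one that does not increase the ``distance to $t_1$'' in the induced graph — and this, together with the minor nuisance of a support whose bids have no maximum, is where the argument needs care.

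Finally, for the second item, suppose towards a contradiction that $\SThresh(v)<1$ for some internal $v$ and fix $r$ with $\SThresh(v)<r<1$. By the first item \PO has a \emph{pure} surely-winning strategy $\sigma$ from $v$ at ratio $r$; normalize to $B_2=1$ and $B_1=r$. Pick a path $v=u_0\to u_1\to\cdots\to u_m=t_2$ (it exists, and $m\ge 1$ since $v\notin\{t_1,t_2\}$) and let \PT play the ``shadow'' strategy: at round $j$, having observed the history, bid $\epsilon_j$ more than $\sigma$ prescribes and commit to moving to $u_j$. Both strategies are deterministic, so the play is determined, and a short induction shows that if $\sum_j\epsilon_j<1-r$ (say $\epsilon_j=(1-r)2^{-j-2}$) then every such \PT bid is legal and \PT wins every bidding — \PO's remaining budget stays strictly below \PT's throughout — so the token is driven $u_0\to\cdots\to t_2$ and \PT wins with probability $1$, contradicting that $\sigma$ surely wins for \PO. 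Hence $\SThresh(v)\ge 1$. The only thing to verify here is the budget bookkeeping behind the induction.
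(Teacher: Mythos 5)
Your three arguments are essentially the ones the paper uses: item 1 by determinizing a mixed surely-winning strategy through its support, item 2 by a shadow strategy that overbids a revealed pure strategy by a vanishing margin along a path to $t_2$, and item 3 by unfolding the recursion. The paper is just far terser (item 3 is dismissed as ``a simple calculation''), and your extra care is genuinely needed in two places. First, the calculation for item 3 does require knowing $T\geq 1$ on internal vertices, which does not follow from the second bullet (that bullet is about $\SThresh$, not $T$, and the two are only identified later); your minimal-vertex propagation argument supplies this correctly. One slip, though: your own formula $T(v^+)-T(v)=(b-1)(b-a)/b$ vanishes when $b=1$, so strictness of the upper inequality needs $T(v^+)>1$, not merely $a<b$ as you state. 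Indeed, for a vertex $v$ adjacent to $t_1$ and to a vertex $u$ with $T(u)=1$ (e.g.\ $u$ adjacent to both leaves) one gets $T(v)=1=T(v^+)$ while $T(v^-)=0\neq T(v^+)$ --- a corner case in which the lemma's strictness claim itself fails for the upper inequality; your computation exposes it, the paper's ``simple calculation'' does not.

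The one point where neither you nor the paper closes the argument is termination of the determinized strategy in item 1 on a general (cyclic) graph. Your ``safe action'' analysis is correct as far as it goes, but a pure strategy built by always picking a safe support action can in principle cycle forever (a mixed strategy can win with probability $1$ while every individual trajectory of some determinization avoids $t_1$), and ``never reaching $t_1$'' already defeats \PO even if $t_2$ is never reached. Your proposed fix --- choosing a safe action that does not increase the distance to $t_1$ --- does not obviously work, because the successor vertex is chosen by \PT whenever she outbids, so no choice of support action by \PO forces progress. You flag this honestly (``this is where the argument needs care''), and the paper's proof, which just says ``arbitrarily choosing, in each round, a bid $b\in supp(f)$,'' silently has the same hole; the clean way out is to note that the first bullet follows from the explicit pure strategy constructed in the subsequent lemma on $T(v)\leq\SThresh(v)$, rather than from determinization. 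So: same route as the paper, executed more carefully, with the two residual issues above worth noting.
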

\begin{proof}
For the first claim, suppose \PO has a strategy $f$ that guarantees winning against any \PT strategy. Suppose $f$ is mixed. Then, we construct a pure strategy $f'$ by arbitrarily choosing, in each round, a bid $b \in supp(f)$. If \PT has a strategy $g$ that guarantees winning with positive probability against $f'$, then by playing $g$ against $f'$, he wins with positive probability, contradicting the assumption that $f'$ guarantees surely winning. For the second claim, suppose towards contradiction that $\SThresh(v) =1-\epsilon$, for $\epsilon > 0$. We think of \PO as revealing his pure winning strategy before \PT. Assuming \PO bids $b$ in a round, \PT reacts by bidding $b+\epsilon/|V|$. Thus, \PT wins $|V|$ biddings in a row and draws the game to $t_2$. A simple calculation verifies the last claim.
\end{proof}

We first show that threshold functions coincide with surely-winning threshold ratios, and then show their existence.

\begin{lemma}
\label{lem:surely-upper}
Consider a qualitative game $\G = \zug{V, E, t_1, t_2}$ and let $T$ be a threshold function for $\G$. Then, for every vertex $v \in V$, we have $T(v) \leq \SThresh(v)$; namely, \PO surely wins from $v$ when his budget exceeds $T(v)$.
\end{lemma}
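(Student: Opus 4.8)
The plan is to prove that $T(v) \le \SThresh(v)$ by exhibiting, for every vertex $v$ and every initial ratio $r > T(v)$, a pure \PO strategy that surely wins. Since the claim is about ratios, I would normalize \PT's budget to $1$ and show that if \PO's budget is $B_1 > T(v)$ (with $T(t_2) = \infty$ making the claim vacuous at $t_2$ and trivial at $t_1$), \PO wins with probability $1$. The natural candidate strategy is the ``bid-to-equalize'' strategy familiar from first-price bidding games, adapted to the all-pay rule: at a vertex $v$ with neighbors ordered so that $T(v^-) \le T(u) \le T(v^+)$, \PO bids an amount $\beta(v)$ designed so that (i) if \PO wins, he moves to $v^-$ and his new budget still exceeds $T(v^-)$, and (ii) if \PO loses, then no matter which neighbor \PT moves to and no matter how much \PT bid, \PO's new budget (unchanged, since he is the winner? no — in all-pay he also pays) still suffices. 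The key subtlety of all-pay bidding is that \PO pays $\beta(v)$ whether he wins or loses, so the budget bookkeeping must account for \PO always losing $\beta(v)$.

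Concretely, I would define $\beta(v)$ from the recurrence: writing $T(v) = T(v^-) + 1 - T(v^-)/T(v^+)$, the ``overbid'' quantity $T(v) - T(v^-) = 1 - T(v^-)/T(v^+)$ is exactly what \PO can afford to spend and still be above $T(v^-)$ if he wins and moves to $v^-$. If \PO bids $b = (B_1 - T(v^-)) \cdot \big(1 - T(v^-)/T(v^+)\big) / \big(T(v) - T(v^-)\big)$ — or more cleanly, scales the ``unit'' bid by the surplus $B_1 - T(v^-)$ relative to the surplus $T(v) - T(v^-)$ — then on winning, the budget becomes $B_1 - b$, which I must check exceeds $T(v^-)$; and on losing, \PT spent more than $b$, so \PT's budget drops below $1 - b$, while \PO's budget $B_1 - b$ must exceed $T(v^+)$ times \PT's remaining budget, i.e., the ratio condition is restored at $v^+$. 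This is where the algebraic identity in the definition of $T$ is used: the factor $1 - T(v^-)/T(v^+)$ is precisely the Richman-style ``average'' normalization that makes both inequalities hold simultaneously. I would then invoke an inductive argument along the DAG (or a potential/ranking argument for general graphs, using that $\SThresh(t_1) = 0$ and that strict progress toward $t_1$ is forced) to conclude the token reaches $t_1$.

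For the termination/progress part on general graphs (not just DAGs), I would argue that \PO's strategy strictly decreases a suitable rank: either the token moves to a strictly smaller-$T$ vertex on each \PO win, or on each \PO loss the ratio surplus strictly improves, and since \PO can keep his budget bounded away from the threshold, \PT cannot stall forever without letting the token reach $t_1$. Alternatively, one observes that whenever the token is parked among vertices all of the same threshold value, \PO can bid slightly above half and force movement; combined with Lemma~\ref{lem:surely-obs} (third item, giving $T(v^-) \le T(v) \le T(v^+)$ with strictness when the neighbors differ) this yields eventual arrival at $t_1$. I would structure the write-up as: (1) normalize and state the strategy $\beta$; (2) the ``winning move'' invariant lemma — ratio above $T$ is preserved; (3) the progress/termination argument; (4) conclude.

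The main obstacle I anticipate is getting the bid formula exactly right so that \emph{both} the win-case inequality ($B_1 - \beta(v) > T(v^-)$, comparing to \PT's full budget) and the lose-case inequality ($B_1 - \beta(v) > T(v^+) \cdot (B_2 - b_2)$ for every \PT bid $b_2 > \beta(v)$) hold, while also handling boundary phenomena: the degenerate case $T(v^+) = \infty$ (so $T(v) = T(v^-) + 1$ and \PO just needs his budget to stay above $T(v^-)$ after paying, with the lose-case vacuous since the ratio at $v^+$ is $\infty$-permissive), the case $T(v^-) = T(v^+)$ where $\beta(v)$ must still force a move, and the tie-breaking convention (\PO wins ties, so \PO can bid exactly the equalizing amount rather than needing a strict overbid). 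Making the induction/ranking precise enough to guarantee probability-$1$ arrival at $t_1$ — rather than merely ``\PO never loses'' — is the second delicate point, since in all-pay games \PT could in principle try to force an infinite play; I would close that gap by noting \PO's strategy keeps a strictly positive surplus that \PT must erode to prevent progress, and \PT's budget is finite.
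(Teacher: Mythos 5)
Your plan is essentially the paper's own proof: a pure strategy that bids the ``overbid'' quantity $1-T(v^-)/T(v^+)$ from the recurrence (plus a small reserved surplus), with the two-case invariant that a win moves to $v^-$ keeping the ratio above $T(v^-)$ and a loss leaves the ratio above $T(v^+)$, followed by a progress argument combining decrease of $T$ with distance to $t_1$ and accumulation of surplus on losses. One detail to fix: your explicit bid $b=(B_1-T(v^-))\cdot\bigl(1-T(v^-)/T(v^+)\bigr)/\bigl(T(v)-T(v^-)\bigr)$ simplifies (since $T(v)-T(v^-)=1-T(v^-)/T(v^+)$) to bidding the \emph{entire} surplus $B_1-T(v^-)$, which leaves the ratio exactly at $T(v^-)$ on a win and breaks the strict invariant --- the paper instead bids the base amount plus only a tiny geometrically decreasing slice $\delta^{k+1-i}$ of the surplus, keeping the rest in reserve to guarantee both strictness and eventual termination.
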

\begin{proof}
We claim that if \PO's ratio is $T(v) + \epsilon$, for $\epsilon > 0$, he can surely-win the game. For $t_1$ and $t_2$, the claim is trivial and vacuous, respectively. We provide a strategy $f_1$ for \PO that guarantees that in at most $n=|V|$ steps, either \PO has won or he is at some node $u\in V$ with relative budget $T(u)+\epsilon+\deps$, where $\deps > 0$ is a small fixed positive number. By repeatedly applying this strategy, either \PO at some point wins directly, or he accumulates relative budget $n$ and then he can force a win in $n$ steps by simply bidding $1$ in each round. 

Suppose the token is placed on vertex $v\in V \setminus \set{t_1, t_2}$ following $0 \leq i \leq n$ biddings, let $v^-,v^+ \in N(v)$ be neighbors of $v$ that achieve the minimal and maximal value of $T$, respectively. For convenience, set $x = T(v^-)$, $y=T(v^+)$, and $\delta=\epsilon^2$.  \PO bids $1-x/y + \delta^{k+1-i}$. We first disregard the supplementary $\delta^{k+1-i}$. When \PO wins, we consider the worst case of \PT bidding $0$, thus \PO's normalized budget is greater than $\big(T(v) - (1-x/y)\big)/\big(1-(1-x/y)\big) = T(v^-)$. On the other hand, when \PO loses, \PT bids at least as much as \PO, and \PO's normalized budget is greater than $\big(T(v) - (1-x/y)\big)/\big(1-2(1-x/y)\big) = T(v^+)$. 

Upon winning a bidding, \PO moves to a neighbor $u \in N(v)$ with $T(u) = T(v^-)$, and when losing, the worst case is when \PT moves to a vertex $u$ with $T(u) = T(v^+)$. The claim above shows that in both cases \PO's budget exceeds $T(u)$. Since $T(t_2) = \infty$, we have established that the strategy guarantees not losing. 

We define \PO's moves precisely, and show how \PO uses the surplus $\delta$ to guarantee winning. We define \PO moves upon winning a bidding at $v \in V \setminus \set{t_1,t_2}$. He moves to $u \in N(v)$ such that $T(u)=T(v^-)$, where if there are several vertices that achieve the minimal value, he chooses a vertex that is closest to $t_1$. By Lemma~\ref{lem:surely-obs}, for every vertex $v$, we have $T(v^-) \leq T(v) \leq T(v^+)$ and equality occurs only when $T(v^-) = T(v^+)$. Thus, \PO's move guarantees that $T(u) \leq T(v)$ and if $u$ is farther from $t_1$ than $v$, then the inequality is strict. The sum of decreases in $T$ and in distance is at most $|V|$, thus $t_1$ is reached after winning $n$ biddings. 

We show how to use the surplus $\delta$ to guarantee winning. First note that $T(v) \geq 1-x/y$ and the extra terms $\delta^{k+1-i}$ add up to at most $\sum_{i=1}^k \delta^i < \sum_{i=1}^\infty \delta^i = \delta/(1-\delta)=\epsilon^2/(1-\epsilon^2)<\epsilon$ so the bids are legal (we assume, WLog., that $\epsilon \le 0.1$). Finally, if \PT wins the $i$-th bidding, \PO's ratio is at least $T(u)+ \epsilon - \frac{\delta^{k+1-i}}{1-\delta}$ whereas \PT's ratio is at most $x/y - \delta^{k+1-i}$. \PO's new ratio is then at least $\big( x+ \epsilon - \frac{\delta^{k+1-i}}{1-\delta} \big)/\big( x/y - \delta^{k+1-i} \big)$, and it is straightforward to check that this is at least as much as the desired $T(u)+\epsilon + \deps$, where $\deps=\epsilon^{2n}(1+\epsilon-1/(1-\epsilon^2))$.
\end{proof}

To obtain the converse, we show a strategy for \PT that wins with positive probability. We identify an upper bound $\beta$ in each vertex such that a \PO bid that exceeds $\beta$ exhausts too much of his budget. Then, \PT bids $0$ with probability $0.5$ and the rest of the probability mass is distributed uniformly in $[0, \beta]$. This definition intuitively allows us to reverse the quantification and assume \PO reveals his strategy before \PT; that is, when \PO bids at least $\beta$, we consider the case that \PT bids $0$, and when \PO bids less than $\beta$, we consider the case where \PT slightly overbids \PO. Both occur with positive probability.

\begin{lemma}
\label{lem:surely-lower}
Consider a qualitative game $\G = \zug{V, E, t_1, t_2}$ and let $T$ be a threshold function for $\G$. Then, for every vertex $v \in V$, we have $T(v) \geq \SThresh(v)$; namely, \PT wins with positive probability from $v$ when \PO's budget is less than $T(v)$.
\end{lemma}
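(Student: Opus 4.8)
The plan is to dualise the proof of Lemma~\ref{lem:surely-upper}. There we built a \emph{pure} \PO strategy maintaining the invariant ``\PO's ratio is above $T$ of the current vertex'' and reaching $t_1$; here I build a \emph{randomised} \PT strategy $g$ whose support contains, for every \PO strategy, a deterministic reply that maintains the dual invariant ``\PO's ratio is below $T$ of the current vertex'' and drives the token to $t_2$. For every $v \in V \setminus \set{t_1,t_2}$ fix a bound $\beta_v > 0$ on the bids \PO can profitably make at $v$: intuitively $\beta_v$ is the largest \PO bid after which, should \PO win and move to his most favourable neighbour, his ratio is still pushed below $T$ of that neighbour — the mirror image of the bid $1-T(v^-)/T(v^+)$ used in Lemma~\ref{lem:surely-upper}, but now computed against \PO's actual, strictly sub-threshold ratio $r<T(v)$, so that $\beta_v>0$ even at vertices with $T(v^-)=T(v^+)$ (this is where the slack $\epsilon:=T(v)-r$ enters). \PT's strategy $g$ at $v$ is: with probability $\tfrac12$ bid $0$, and with probability $\tfrac12$ bid uniformly in $[0,\beta_v]$; upon winning \PT moves the token to a neighbour $v^+$ maximising $T$, breaking ties towards $t_2$ (recall $T(t_2)=\infty$, so $v^+=t_2$ whenever $t_2\in N(v)$).

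The crux is the quantifier reversal. Fix a \PO strategy $f$; since $g$ is randomised it suffices to treat pure $f$ (if $g$ wins with probability $0$ against a mixed $f$, it does so against $f$-almost-every pure realisation). Define along the play the deterministic reply $g^*$: to a \PO bid $b_1$ at $v$, respond by bidding $0$ if $b_1 \ge \beta_v$, and by bidding slightly above $b_1$ — by a fixed positive amount chosen to preserve the invariant below — if $b_1 < \beta_v$. Every move of $g^*$ lies in the support of $g$, and, crucially, the event at each step (``\PT bids $0$'', resp.\ ``\PT bids in the short interval just above $b_1$'') has positive probability under $g$; conditioning on these events over the first $N$ steps carves out a positive-probability set of plays of $(f,g)$ on which \PT plays exactly as $g^*$.

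On this set I maintain, by the same case analysis as in Lemma~\ref{lem:surely-upper}, the invariant that \PO's ratio at the current vertex $u$ is $< T(u)$: if \PO bid $\ge \beta_v$ then, after \PT bids $0$ and \PO moves, \PO's ratio has dropped below $T(v^-)\le T$ of his new vertex; if \PO bid $< \beta_v$ then, after \PT overbids slightly and moves to $v^+$, \PO's ratio remains below $T(v^+)$. Since $T(t_1)=0$ and ratios are non-negative, \PO never reaches $t_1$ on this set. It remains to see that \PT reaches $t_2$ within a \emph{bounded} number of steps, so that the finitely many positive per-step probabilities multiply to something positive. Whenever \PO keeps bidding $\ge \beta_v$, \PT keeps bidding $0$ and \PO wins every bidding, so — by the potential argument of Lemma~\ref{lem:surely-upper} on $T$-value plus distance to $t_1$ — \PO would be forced towards $t_1$ within $O(|V|)$ steps, which the invariant forbids; hence within $O(|V|)$ steps \PO must bid $<\beta_v$, \PT wins that bidding and moves to a strictly higher $T$-level, while a suitable potential (again measured via $T$, as in Lemma~\ref{lem:surely-upper}) strictly increases. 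Since $T$ of any non-leaf is at most $\max_u T(u)$, after a number of biddings bounded in terms of the game and $\epsilon$ the token sits at a vertex adjacent to $t_2$, and \PT's next winning bidding ends the game in $t_2$. Taking $N$ to be this bound shows the probability that \PT wins against $f$ is positive, hence $T(v) \ge \SThresh(v)$.

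The step I expect to be the main obstacle is choosing $\beta_v$ (together with the accompanying potential) so that the dual invariant is genuinely preserved in \emph{both} cases simultaneously — in particular at the degenerate vertices where the Lemma~\ref{lem:surely-upper}-style bid $1-T(v^-)/T(v^+)$ vanishes, forcing $\beta_v$ to be driven by the slack $\epsilon$ rather than by $T$ alone — and, just as delicately, so that the number of biddings before $t_2$ is reached is uniformly bounded rather than infinite, since otherwise the per-step probabilities (which need not be bounded away from $0$) would only multiply to something positive over a finite horizon.
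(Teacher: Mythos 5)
Your overall architecture matches the paper's: a randomised \PT strategy that bids $0$ with probability $1/2$ and otherwise uniformly on $[0,\beta_v]$, a quantifier reversal carving out a positive-probability event on which \PT effectively replies to \PO's revealed bid, an invariant keeping \PO's ratio below $T$ of the current vertex, and a finite horizon so that finitely many positive per-step probabilities multiply to something positive. But two steps — one you get wrong, one you explicitly defer — are exactly where the work lies. Your termination argument rests on a false claim: you assert that if \PO keeps bidding at least $\beta_v$ (so \PT keeps bidding $0$ and \PO keeps winning) then ``by the potential argument of Lemma~\ref{lem:surely-upper} \PO would be forced towards $t_1$ within $O(|V|)$ steps.'' That potential argument depends on \PO \emph{choosing} to move to a $T$-minimising neighbour closest to $t_1$; an adversarial \PO who wins a bidding may move the token anywhere, in particular around a cycle, and is in no way forced towards $t_1$. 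The correct reason \PO cannot keep winning is budget depletion: each such win costs him at least $\min_u \beta_u > 0$, so the number of such wins is bounded in terms of $\epsilon$ and the game, not by $O(|V|)$. Relatedly, the endgame is not that a monotone potential drives the token adjacent to $t_2$; it is that once \PO's budget has been driven below \PT's, \PT overbids \PO's entire remaining budget $n$ times in a row (each with positive probability) and walks to $t_2$.

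The second issue is that the point you flag as ``the main obstacle'' — choosing $\beta_v$ at degenerate vertices with $T(v^-) = T(v) = T(v^+)$ — is the key missing idea, and ``let the slack $\epsilon$ drive $\beta_v$'' does not by itself resolve it. Whenever \PT overbids, \PO's \emph{ratio increases}, eating into the slack; along an excursion through many degenerate vertices these increases accumulate, while \PO need only pay once (at least $\beta$) to break out of the excursion by winning a bidding. The paper resolves this by setting $\beta = p\,\epsilon\,(2n)^{-d}$ at a degenerate vertex at distance $d$ from the nearest non-degenerate one: the geometric decay makes the sum of \PT's possible bids along any such excursion at most a $1/n$ fraction of the single \PO bid needed to win there, so the invariant is preserved and \PO's budget still drops by a definite amount per cycle. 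Without this (or an equivalent) quantitative choice, your invariant is not actually shown to be maintained, and the proof does not close.
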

\begin{proof}
We distinguish between two types of vertices in $V \setminus \set{t_1, t_2}$. The first type is $V_1 = \set{v: T(v^-) < T(v) < T(v^+)}$ and the second type is $V_2 = \set{v: T(v^-) = T(v) = T(v^+)}$. Assume \PT's budget is at least $1 + \epsilon$ and \PO's budget is at most $T(v) - \epsilon$. We use $\beta$ to denote an upper bound on \PT's bids. For $v \in V_1$, we set $\beta = 1-T(v^-)/T(v^+)$, and for $v \in V_2$, we set $\beta = p \cdot \epsilon (2n)^{-d}$, where $p$ is a small constant, $n = |V|$, and $d$ is the length of the shortest path from $v$ to a vertex in the first type of vertices. \PT bids $0$ with probability $1/2$ and the rest of the probability mass is distributed uniformly in $[0, 1]$. Upon winning, in the first case, \PT moves to a vertex $v^+$ that is closest to $t_2$, and in the second case, to a vertex that is closest to a vertex in $V_1$.

Intuitively, a bid above $\beta$ is too high for \PO since it exhausts too much of his budget. Thus, when \PO bids at least $\beta$, we consider the case where \PT bids $0$, and when \PO bids below $\beta$, we consider the case where \PT bids slightly above him, both occur with positive probability. Note that in both cases, we maintain the invariant that  \PT's budget is at least $1 + \epsilon$ and \PO's budget is at most $T(v) - \epsilon$, thus \PO does not win the game. We show that \PT wins with positive probability. Consider a finite play $\pi$ that is obtained from some choice of strategy of \PO. Note that a cycle in $\pi$ necessarily results from a \PO win (against a \PT bid of $0$), and such a win occurs when he bids at least $\beta$. In both cases, \PO's budget decreases by a constant. The tricky case is when the cycle is contained in $V_2$. Then, the sum of \PT's bids is at most $\sum_{i=k}^d p \epsilon (2n)^{-i} \leq (p \epsilon (2n)^{-d})/n$ whereas \PO bids at least $p \epsilon (2n)^{-d}$. Thus, \PO's decrease in budget is larger by a factor of $n$ of \PT. Thus, \PT's budget will eventually be larger than \PO's, at which point \PT wins in at most $n$ steps since he always bids above \PO's budget with positive probability.
\end{proof}

Finally, we show existence of threshold functions. We first show existence in DAGs, which is a simple backwards-induction argument that we have implemented (see Example~\ref{ex:tic-tac-toe}). To obtain existence in a general game $\G$, we find threshold functions in games on DAGs of the form $\G_n$, for $n \geq \Nat$, in which \PO is restricted to win in at most $n$ steps. We tend $n$ to infinity and show that the limit of the threshold functions in the games is a threshold function in $\G$.

\begin{lemma}
\label{lem:exist}
Every qualitative reachability bidding game has a threshold function.
\end{lemma}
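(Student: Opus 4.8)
The plan is to split into two cases. First I would prove existence of a threshold function on a DAG by a direct backward induction, and then handle a general game $\G$ by approximating it by finite‑horizon DAGs and passing to a limit.

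\textbf{The DAG case.} Process the vertices of $\G$ in reverse topological order, setting $T(t_1)=0$ and $T(t_2)=\infty$. When a non‑leaf $v$ is reached, every $u\in N(v)$ already has a value, so I pick $v^-,v^+\in N(v)$ attaining the minimum and the maximum of $T$ over $N(v)$ and define $T(v)=T(v^-)+1-T(v^-)/T(v^+)$. All that needs checking is that this is well defined and preserves the invariant ``$T(u)=\infty$ exactly when $u=t_2$'' (so the division and the cases make sense): since every non‑leaf has a path to two leaves, hence to both $t_1$ and $t_2$, no non‑leaf $v$ has $N(v)\subseteq\set{t_1}$ or $N(v)\subseteq\set{t_2}$, so $0<T(v^+)$ and $T(v^-)<\infty$, whence $T(v)\le T(v^-)+1<\infty$. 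The resulting $T$ is a threshold function by construction.

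\textbf{The general case.} For $n\in\Nat$ let $\G_n$ be the DAG obtained by unfolding $\G$ to depth $n$: states are pairs $(v,k)$ with $0\le k\le n$ (steps remaining), with edges $(v,k)\to(u,k-1)$ for $u\in N(v)$, $k\ge1$; each $(t_1,k)$ is identified with $t_1$, and each $(v,0)$ with $v\ne t_1$ and each $(t_2,k)$ is identified with $t_2$, so \PO must reach $t_1$ within $n$ steps. By the DAG case $\G_n$ has a threshold function $T_n$, which by Lemmas~\ref{lem:surely-upper} and~\ref{lem:surely-lower} is the surely‑winning threshold function of $\G_n$. Since surely winning from a state depends only on the sub‑game below it, which is the same for every $n\ge k$, the value $g_k(v):=T_n(v,k)$ is independent of $n\ge k$ and satisfies, for non‑leaf $v$ and $k\ge1$, $g_k(v)=g_{k-1}(v^-_k)+1-g_{k-1}(v^-_k)/g_{k-1}(v^+_k)$ with $v^-_k,v^+_k\in N(v)$ attaining $\min$/$\max$ of $g_{k-1}$ over $N(v)$, and $g_0(v)=\infty$ for $v\ne t_1$. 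A strategy transfer (winning within $k$ steps is winning within $k+1$ steps) gives $g_{k+1}(v)\le g_k(v)$; Lemma~\ref{lem:surely-obs} gives $g_k(v)\ge1$ for non‑leaf $v$; and $g_k(v)<\infty$ once $k$ exceeds the distance from $v$ to $t_1$ (with enough budget \PO forces a shortest path by bidding $1$ each step). Hence $h(v):=\inf_k g_k(v)=\lim_k g_k(v)$ exists in $[0,\infty]$, is $0$ at $t_1$, $\infty$ at $t_2$, and finite elsewhere.

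The remaining step — showing $h$ is a threshold function — is the main obstacle, because the recursion for $g_k$ must be pushed to the limit even though the minimising/maximising neighbours $v^-_k,v^+_k$ may change with $k$. I would argue as follows: fix a non‑leaf $v$; since $N(v)$ is finite, some fixed pair $(v^-,v^+)$ equals $(v^-_k,v^+_k)$ for infinitely many $k$, and along that subsequence $g_k(v)=g_{k-1}(v^-)+1-g_{k-1}(v^-)/g_{k-1}(v^+)$ with $g_{k-1}(v^-)=\min_{u\in N(v)}g_{k-1}(u)$ and $g_{k-1}(v^+)=\max_{u\in N(v)}g_{k-1}(u)$. Taking $k\to\infty$ along this subsequence and using that $\min$ and $\max$ of finitely many convergent sequences converge to the $\min$ and $\max$ of the limits, we get $h(v^-)=\min_{u\in N(v)}h(u)$ and $h(v^+)=\max_{u\in N(v)}h(u)$; and since $g_{k-1}(v^+)\ge1$ (or $=\infty$) the map $(x,y)\mapsto x+1-x/y$ is continuous at $(h(v^-),h(v^+))$ in $[0,\infty]^2$ under the conventions $1/\infty=0$, so $h(v)=h(v^-)+1-h(v^-)/h(v^+)$. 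As the right‑hand side depends only on $\min_u h(u)$ and $\max_u h(u)$, $h$ satisfies the defining equation at $v$, so $h$ is a threshold function, completing the proof. (Equivalently, this second half is the statement that iterating from $T\equiv\infty$ the monotone operator $T\mapsto\big(v\mapsto \min_{N(v)}T+1-\min_{N(v)}T/\max_{N(v)}T\big)$ converges to a fixed point, the iterates being exactly the $g_k$.)
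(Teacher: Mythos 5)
Your proposal is correct and follows essentially the same route as the paper: backward induction for DAGs, then the finite-horizon truncations $\G_n$, monotonicity and boundedness of the resulting thresholds, and passage of the recursion to the limit. The only difference is cosmetic — you verify the limit equation via a subsequence on which the extremal neighbours are fixed plus continuity of $(x,y)\mapsto x+1-x/y$, where the paper instead runs an explicit $\epsilon$-estimate; your version is arguably tidier but not a different argument.
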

\begin{proof}
Note that in a game that is played on a DAG, the existence of a threshold function is shown easily in a backwards-induction manner from the leaves. Consider a game $\G=\zug{V, E, t_1, t_2}$. For $n \in \Nat$ let $\G_n$ be the reachability game in which \PO wins iff the game reaches $t_1$ in at most $n$ rounds. Since $\G_n$ is a DAG, by the above, a threshold function $T_n$ exists in $\G_n$. By Lemma~\ref{lem:surely-upper}, we have $T_n(v) \geq T_{n+1}(v)$, and by Lemma~\ref{lem:surely-obs}, we have $T_n(v) \geq 1$, thus the sequence $\set{T_n(v)}_{n \geq 1}$ converges. We define $T(v) = \lim_{n \to \infty} T_n(v)$ and claim that $T$ is a threshold function. 

We show that $T(v)$ converges to $T(v^-) + 1- T(v^-)/T(v^+)$. Note that since $\G_n$ is a DAG, we have $T_n(v) = T_{n-1}(v^-) + 1- T_{n-1}(v^-)/T_{n-1}(v^+)$. Moreover, for $D = |V|$, note that $T_D(v) \leq D$, since a budget of $D$ allows \PO to win $D$ times in a row and draw the game to $t_1$. Since $\set{T_n(v)}_{n \geq 1}$ is a monotonically decreasing sequence, for every $\epsilon > 0$, there is $N \in \Nat$ such that for every $n > N$, we have $T(v) \leq T_n(v) \leq T(v) + \epsilon$. Given $\epsilon$, we choose $n$ such that 
\[
T(v) \leq T_n(v) \leq T(v^-) +\epsilon + 1- \frac{T(v^-)}{T(v^+)+\epsilon} \leq
\]
\[
\leq T(v^-) +\epsilon + 1- \frac{T(v^-)}{T(v^+)(1+\epsilon)} \leq 
\]
\[
\leq T(v^-) +\epsilon + 1- \frac{T(v^-)}{T(v^+)}(1-\epsilon)\leq
\]
\[
\leq T(v^-) +1- \frac{T(v^-)}{T(v^+)}+ \epsilon \cdot(D+1)
\]
\end{proof}

We use the characterization of surely-winning threshold ratios by means of threshold functions (Lemmas~\ref{lem:surely-upper} and~\ref{lem:surely-lower}) to find the threshold ratios in the game that is depicted in Fig.~\ref{fig:irrational}, and show that they can be irrational. In addition the characterization has computation-complexity consequences on the problem of finding surely-winning threshold ratios. In DAGs, finding threshold functions is done in linear time. In general graphs, the characterization implies a reduction to the {\em existential theory of the reals} (ETR, for short) by phrasing the constraints in Definition~\ref{def:thresh} as an input to ETR. It is known that ETR is in PSPACE \cite{Can88}. Combining the lemmas above, we have the following.

\begin{theorem}
\label{thm:surely}
Surely-winning threshold ratios exist in every qualitative reachability game and coincide with threshold functions. Surely-winning threshold ratios can be irrational. Given a vertex $v$ and a value $c \in \Q$, deciding whether $\SThresh(v) \geq c$ is in PSPACE for general graphs and can be solved in linear time for games on DAGs.
\end{theorem}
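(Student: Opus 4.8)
The plan is to obtain all three assertions by assembling the machinery already in place. For existence together with the coincidence statement, I would combine Lemma~\ref{lem:exist}, which produces a threshold function $T$ in every qualitative reachability game, with Lemmas~\ref{lem:surely-upper} and~\ref{lem:surely-lower}, which together give $T(v)\le\SThresh(v)\le T(v)$, hence $\SThresh(v)=T(v)$, at every vertex. Since $\SThresh(v)$ is uniquely pinned down by its defining two-sided property, this also shows the threshold function is \emph{unique}; I would record this explicitly, as it is exactly what the complexity argument needs. Also note, as in the proof of Lemma~\ref{lem:exist}, that $T(v)=\SThresh(v)\le|V|$ for every non-leaf $v$ (bid $1$ along a shortest path to $t_1$), so $t_2$ is the only vertex with infinite threshold.

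For irrationality I would exhibit the small game of Fig.~\ref{fig:irrational}: write out the defining recurrence of Definition~\ref{def:thresh} at its internal vertices, where the choice of $v^-$ and $v^+$ is forced by the gadget's structure, and solve the resulting system. It collapses to one quadratic equation whose feasible root has the form $a+b\sqrt{d}$ with $d$ not a perfect square; by the coincidence statement this root is $\SThresh$ at the designated vertex, establishing irrationality. This is the only step that needs a concrete construction rather than a reduction, so it is where one must be a little careful (checking the root is the one in the admissible range), but it is a bounded hand computation once the gadget is fixed.

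For the complexity claims, the DAG case is immediate: process vertices in reverse topological order and, at each non-leaf $v$, take the minimum and maximum of the already-computed values over $N(v)$ and apply the recurrence; each edge is touched a constant number of times, giving linear time, and by the coincidence statement the output is $\SThresh$. For general graphs I would reduce ``$\SThresh(v)\ge c$'' to satisfiability of an existential-theory-of-the-reals instance $\Phi$. Introduce a variable $x_w$ for each $w\neq t_2$; assert $x_{t_1}=0$ and the normalization $x_w\ge 1$ for non-leaves; for each non-leaf $v$ write a disjunction over the polynomially many ordered pairs $(u^-,u^+)$ of neighbors of the conjunction ``$u^-$ and $u^+$ realize $\min$ and $\max$ of $\{x_u: u\in N(v)\}$'' together with the recurrence, cleared of denominators into a polynomial equation (legitimate since $x_{u^+}\ge 1$), with the case $u^+=t_2$ replaced by $x_v=x_{u^-}+1$ because then $T(v^-)/T(v^+)=0$; finally conjoin $x_v\ge c$ (taking $v=t_2$ to be a trivial ``yes''). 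By uniqueness of the threshold function, $\Phi$ is satisfiable iff $\SThresh(v)\ge c$, and since $\Phi$ has polynomial size and ETR is in PSPACE \cite{Can88}, the bound follows.

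I expect the main obstacle to be the bookkeeping in this encoding rather than anything conceptually deep: faithfully expressing the extremal-neighbor selection as a polynomial-size Boolean combination of polynomial (in)equalities, and correctly handling the single $\infty$ corner case at $t_2$, while keeping the whole instance polynomial. Everything of substance is already carried by Lemmas~\ref{lem:surely-obs} through~\ref{lem:exist}, so the remaining work is organizational.
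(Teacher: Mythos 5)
Your proposal is correct and follows essentially the same route as the paper: the existence and coincidence claims are obtained by combining Lemma~\ref{lem:exist} with Lemmas~\ref{lem:surely-upper} and~\ref{lem:surely-lower}, irrationality comes from solving the recurrence on the gadget of Fig.~\ref{fig:irrational}, the DAG case is backwards induction, and the general case is a direct encoding of the constraints of Definition~\ref{def:thresh} into ETR. Your explicit remarks on uniqueness of the threshold function and on the $T(t_2)=\infty$ corner case are details the paper leaves implicit, not a different argument.
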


\begin{figure}
\center
\includegraphics[height=3cm]{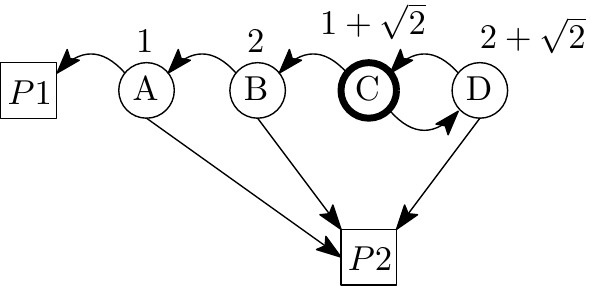}
\caption{A qualitative-reachability game with target $Pi$ for \PLi in which surely-winning threshold ratios are irrational.} 
\label{fig:irrational}
\end{figure}

\section{Finding Approximated Values}
\label{sec:approx}
In this section, we focus on games on DAGs. The algorithm that we construct is based on a discretization of the budgets; namely, we restrict the granularity of the bids and require \PO to bid multiples of some $\epsilon > 0$, similar in spirit to discrete-bidding games \cite{DP10}. We first relate the approximated value with the value in $\G$ with no restriction on the bids. Then, in Section~\ref{sec:exp}, we experiment with an implementation of the algorithm and show interesting behavior of all-pay bidding games. We define the approximate value as follows.

\begin{definition}[Approximate Value Function]
Let $\G$ be a game on a DAG and $\epsilon > 0$. Let $val_\epsilon$ be the {\em approximate-value} function in $\G$ when \PO is restricted to choose bids in $\set{\epsilon \cdot k :k \in \Nat}$ and \PT wins ties. 
\end{definition}

Our algorithm is based on the following theorem.

\begin{theorem}
\label{thm:approx-alg}
Consider a game on a DAG $\G$ where each leaf is labeled with a reward. Let $v$ be a vertex in $\G$, $B_1 \in \Real$ be an initial budget for \PO, $d(\G)$ be the longest path from a vertex to a leaf in $\G$, and  $\epsilon > 0$. Then, we have $val(v, B_1) \leq val_\epsilon(v, B_1 + d(\G)\cdot \epsilon)$.
\end{theorem}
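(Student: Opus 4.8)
The plan is to show that Player~$1$ cannot lose much by being forced to round his bids down to multiples of $\epsilon$, as long as he is compensated with an extra additive budget of $d(\G)\cdot\epsilon$. The strategy is a simulation argument: take a near-optimal (unrestricted) Player~$1$ strategy $f$ in the game started at $v$ with budget $B_1$, and build from it a strategy $f_\epsilon$ in the $\epsilon$-discretized game started at $v$ with budget $B_1 + d(\G)\cdot\epsilon$ that does at least as well against every Player~$2$ strategy. Whenever $f$ prescribes a bid $b$ (with a target vertex $u\in N(v)$), the discretized strategy $f_\epsilon$ bids $\lceil b/\epsilon\rceil\cdot\epsilon$, i.e.\ it rounds \emph{up} to the next multiple of $\epsilon$, keeping the same target $u$. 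Rounding up guarantees that whenever $f$ would have won a bidding, $f_\epsilon$ also wins it (it bids at least as much), so the sequence of vertices visited under $f_\epsilon$ stochastically dominates — in fact can be coupled to be identical on the ``Player~$1$ wins'' branches — the play under $f$; and since the payoff depends only on which leaf is reached, this can only help Player~$1$. One subtlety: ties are resolved in Player~$2$'s favour in the discretized game, so one should bid $\lceil b/\epsilon\rceil\cdot\epsilon$ if this is strictly above $b$, and $b+\epsilon$ if $b$ is already a multiple of $\epsilon$ (or simply $\lfloor b/\epsilon\rfloor\epsilon+\epsilon$ throughout), ensuring the discretized bid is \emph{strictly} larger than $b$ so that a strict win stays a strict win. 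This costs at most $\epsilon$ of extra budget per bidding that Player~$1$ makes.

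The key accounting step is to bound the total overpayment. Since $\G$ is a DAG, every play reaches a leaf within $d(\G)$ steps, so along any play Player~$1$ submits at most $d(\G)$ bids before the game ends. Each bid under $f_\epsilon$ exceeds the corresponding bid under $f$ by less than $\epsilon$. Hence, in any play, the total amount spent by $f_\epsilon$ exceeds that spent by $f$ by less than $d(\G)\cdot\epsilon$. Therefore, starting $f_\epsilon$ with budget $B_1 + d(\G)\cdot\epsilon$ keeps all its bids legal (non-negative and within the remaining budget) whenever the corresponding bids of $f$ were legal with budget $B_1$ — this is where the additive term $d(\G)\cdot\epsilon$ is exactly what is needed. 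One should also note that Player~$2$'s moves are unaffected: $f_\epsilon$ reacts to histories, and on any history consistent with both, if $f$ would move the token to some $u$ upon winning then so does $f_\epsilon$, and when Player~$2$ wins a bidding the two games evolve identically regardless of what $f$ versus $f_\epsilon$ bid. Putting this together: for every Player~$2$ strategy $g$ in the discretized game, there is a coupling of the play under $(f_\epsilon,g)$ with the play under $(f,g')$ in the unrestricted game (where $g'$ mimics $g$) such that the two plays reach the same leaf; hence the expected payoff of $f_\epsilon$ against any $g$ is at least the expected payoff of $f$ against the corresponding $g'$, which is at least $\inf_{g'}$ of that, i.e.\ at least $val^\downarrow(v,B_1)$ up to the slack in choosing $f$ near-optimally. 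Taking the supremum over $f$ and letting the near-optimality slack vanish gives $val(v,B_1)\le val_\epsilon(v, B_1+d(\G)\cdot\epsilon)$.

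I would organize the write-up as: (1) fix $\eta>0$ and a strategy $f$ for Player~$1$ in $\langle v, B_1\rangle$ achieving within $\eta$ of $val^\downarrow(v,B_1)$; (2) define the rounded-up strategy $f_\epsilon$ in the discretized game with budget $B_1+d(\G)\epsilon$, noting the strict-win preservation and the per-step overpayment bound $<\epsilon$; (3) verify legality of all $f_\epsilon$-bids using the DAG depth bound $d(\G)$ on the number of biddings; (4) set up the coupling between plays and argue the reached leaf is identical, hence the payoff is identical pathwise, so the expectation against any $g$ is $\ge$ the expectation of $f$ against the mirrored $g'$; (5) take $\inf_g$, then $\sup_f$, then $\eta\to 0$.

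The main obstacle I anticipate is the bookkeeping around ties and the precise coupling of the two stochastic processes: one must be careful that the rounded bid is strictly larger than the original (because Player~$2$ wins ties in the discretized game, whereas Player~$1$ wins ties in the unrestricted game), and that the target vertices carried by the bids line up so that ``Player~$1$ wins'' branches in the two games visit the same vertex and ``Player~$2$ wins'' branches are governed entirely by $g$. Everything else — the depth bound, the geometric-free budget accounting, the fact that payoff is a function of the terminal leaf only — is routine once the coupling is stated cleanly.
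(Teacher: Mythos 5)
Your overall plan---simulate a near-optimal continuous strategy $f$ in the discretized game by rounding its bids up to the grid, and pay for the rounding with the extra $d(\G)\cdot\epsilon$ of budget, at most $\epsilon$ per round along any root-to-leaf path---is the right idea, and the budget accounting in step (3) is fine. But there is a genuine gap in the coupling of step (4). You assert that ``when Player~$2$ wins a bidding the two games evolve identically regardless of what $f$ versus $f_\epsilon$ bid.'' This presupposes that the \emph{winner} of each bidding is the same in the two coupled plays, and that is exactly what can fail: if $f$ bids $b_1$ and Player~$2$ bids $b_2$ with $b_1 < b_2 < \lfloor b_1/\epsilon\rfloor\epsilon + \epsilon$, then Player~$2$ wins in the continuous game but $f_\epsilon$ wins in the discretized game. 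At that point the plays diverge: in the discretized game Player~$1$ must move the token, he does not know which neighbour Player~$2$ ``would have'' chosen, and $f$ provides no continuation for a history in which Player~$1$ won this bidding. Winning a bidding you intended to lose is not automatically harmless---your stochastic-dominance claim (``this can only help Player~$1$'') is doing real work here and is unproved. The symmetric fix of rounding down fails for the dual reason (ties now go to Player~$2$, so $f_\epsilon$ would lose biddings that $f$ wins).

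The gap is repairable, but it needs an extra ingredient you did not supply: first reduce to Player~$2$ strategies that bid only multiples of $\epsilon$ in the discretized game. Against a grid-bidding Player~$1$ who wins only strict inequalities, replacing a Player~$2$ bid $b_2$ by $\lfloor b_2/\epsilon\rfloor\epsilon$ beats exactly the same set of Player~$1$ bids and leaves Player~$2$ with more budget, so by a (short, but necessary) monotonicity-in-budget argument this is weakly better for Player~$2$; hence the infimum over $g$ is attained on the grid. Once $b_2$ is a multiple of $\epsilon$, no value lies strictly between $b_1$ and $\lfloor b_1/\epsilon\rfloor\epsilon + \epsilon$, the winner is identical in the two coupled plays in every case (including the tie $b_2 = b_1'$, which Player~$2$ wins in both), and your argument goes through. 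Without this reduction---or some alternative treatment of the ``unexpected win'' branch, e.g.\ an induction on DAG depth showing Player~$1$ can cash in such a win by moving to the best neighbour and switching to an optimal discretized strategy there---the coupling you describe does not exist.
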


Theorem~\ref{thm:approx-alg} gives rise to Algorithm~\ref{alg} that finds $val_\epsilon(B_1)$, for every $B_1$ that is a multiple of $\epsilon$. Note that assuming \PO bids only multiples of $\epsilon$, we can assume that \PT also bids multiples of $\epsilon$. The algorithm constructs a two-player zero-sum {\em matrix game}, which is $\zug{A_1, A_2, pay}$, where, for $i \in \set{1,2}$, $A_i$ is a finite set of actions for \PLi, and, given $a_i \in A_i$, the function $pay(a_1, a_2)$ is the payoff of the game. A solution to the game is the optimal payoff that \PO can guarantee with a mixed strategy, and it is found using linear programming. Let $\SThresh(v)$ denote the threshold budget with which \PO can guarantee the highest reward, then $val_\epsilon(B_1)$ equals the highest reward, for $B_1 > \SThresh(v)$. To compute $\SThresh(v)$, we use the linear-time algorithm in the previous section.

\begin{figure}
\begin{center}
\begin{algorithmic}
\algnotext{EndIf}
\algnotext{EndFor}
\State \underline{{\bf Approx-Values}($v, \epsilon$)}
\If {$N(v) = \emptyset$} \Return $weight(v)$ \EndIf
\State $\forall u \in N(v)$ call $\Call{Approx-Values}{u, \epsilon}$

\For {$B_1 = k \cdot \epsilon \text{ s.t. } 0 \leq B_1 \leq \SThresh(v)$} 
\State // Construct a two-player finite-action matrix game.
\State $A_1 = \set{b_1 = i \cdot \epsilon \text{ s.t. } 0 \leq b_1 \leq B_1}$
\State $A_2 = \set{b_2 = j \cdot \epsilon \text{ s.t. } 0 \leq b_2 \leq 1}$

	\For {$b_1 \in A_1, \ b_2 \in A_2$}
		\State // \PO's normalized new budget.
		\State $B'_1 = \lceil (B_1-b_1)/(1-b_2)\rceil_\epsilon$ 
		\If{$b_1 > b_2$} 
			\State $pay(b_1, b_2) = \max_{u \in N(v)} val_{\epsilon}(u, B'_1)$
		\Else
			\State $pay(b_1, b_2) = \min_{u \in N(v)} val_{\epsilon}(u, B'_1)$
		\EndIf
	\EndFor
	\State $val_\epsilon(v, B_1) = \Call{Solve}{A_1, A_2, pay}$
\EndFor
\State \Return $val_\epsilon$
\end{algorithmic}
\end{center}
\caption{An FPTAS for finding upper- and lower-bounds on the values for every initial budget ratio.}
\label{alg}
\end{figure}

\section{``Win $n$ in a Row'' Games}
In this section we study a simple fragment of qualitative games.
\begin{definition}[Win $n$ in a Row Games]
For $n \in \Nat$, let $\WnR(n)$ denote the qualitative game in which \PO needs to win $n$ biddings in a row and otherwise \PT wins. For example, see a depiction of $\WnR(2)$ in Fig.~\ref{fig:2-1}.
\end{definition}
We start with a positive results and completely solve $\WnR(2)$. Then, we show that optimal strategies require infinite support already in $\WnR(3)$.

\subsection{A solution to ``win twice in a row''}
We start by solving an open question that was posed in \cite{LLPSU99} and characterize the value as a function the budget ratio in the win twice in a row game $\WnR(2)$ (see a depiction of the game in Fig.~\ref{fig:2-1}).

\begin{theorem}
\label{thm:2-1}
Consider the all-pay bidding game $\WnR(2)$ in which \PO needs to win twice and \PT needs to win twice. The value exists for every pair of initial budgets. Moreover, suppose the initial budgets are $B_1$ for \PO and $1$ for \PT. Then, if $B_1 > 2$, the value is $1$, if $B_1 < 1$, the value is $0$, and if $B_1 \in (1+\frac{1}{n+1}, 1+\frac{1}{n}]$, for $n \in \Nat$, then the value is $\frac{1}{n+1}$.
\end{theorem}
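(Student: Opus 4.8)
The plan is to prove matching upper and lower bounds on the value for budget ratios $B_1 \in (1+\tfrac{1}{n+1}, 1+\tfrac{1}{n}]$, the extreme cases $B_1 > 2$ and $B_1 < 1$ being immediate: if $B_1 > 2$ then \PO bids $1+\tfrac{\epsilon}{2}$ twice, and if $B_1 < 1$ then \PT bids more than $B_1$ in the first round and wins surely (in fact, $\SThresh(v_0) = ?$ can be read off from the threshold function of Definition~\ref{def:thresh}, which for $\WnR(2)$ gives exactly the cutoffs $1$ and $2$). So fix $n$ and $B_1 \in (1+\tfrac{1}{n+1}, 1+\tfrac{1}{n}]$. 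I would first set up the one-step recursion: after the first bidding with bids $b_1$ (for \PO) and $b_2$ (for \PT), if \PO wins the token is at $v_1$ with budgets $(B_1 - b_1, 1 - b_2)$ and \PO then needs to win exactly one more bidding; if \PO loses, the game is over and \PO gets $0$. From $v_1$, the value of ``win once'' with budgets $(C_1, C_2)$ is the classical fact that \PO wins with probability $1$ iff $C_1 > C_2$, and more generally (this is the base computation I would do carefully) the value of a single all-pay bidding for a prize of $1$ with budgets $(C_1, C_2)$, ties going to \PO, is $\min\{1, C_1/C_2\}$ when — hmm, actually I need to think: with a single all-pay auction the loser's payment is wasted but the game ends, so the value is simply $\Pr[\text{\PO outbids or ties \PT}]$ optimized; the equilibrium of the one-shot all-pay auction with budget caps gives value $1 - \tfrac{C_2}{2C_1}$ if $C_1 \ge C_2$ and $\tfrac{C_1}{2C_2}$ if $C_1 \le C_2$ — I would pin this down first since everything rests on it.

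For the \textbf{lower bound} (a strategy for \PO guaranteeing $\tfrac{1}{n+1}$), I would generalize the construction sketched in Example~\ref{ex:2-1} and Fig.~\ref{fig:2-1}: \PO uses a mixed bid in the first round supported on a finite set of carefully chosen values. In the case $n=1$, $x\in(\tfrac12,1]$, the strategy bids $x$ and $1$ each with probability $\tfrac12$; for general $n$ I expect the right strategy bids among $n+1$ values roughly of the form $\{x, 2x, \dots\}$ or $\{1 - kx : \dots\}$ chosen so that after \PO wins the first bidding with \emph{any} of these bids, the resulting budget ratio $(B_1 - b_1)/(1 - b_2)$ against every \PT best response is at least $1$ (so \PO can then surely win), and the probability of \PO winning the first bidding against the \PT bid distribution is exactly $\tfrac{1}{n+1}$. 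The key design constraint is: for whichever deterministic bid \PT uses (and one argues via minimax that pure responses suffice), the event ``\PO's bid $\ge$ \PT's bid AND \PO's surviving ratio $>1$'' has probability $\ge \tfrac{1}{n+1}$. I would verify this by checking \PT's finitely many ``threshold'' responses — \PT either bids just above one of \PO's support points or bids $0$ — exactly as the table in Fig.~\ref{fig:2-1} does for $n=1$.

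For the \textbf{upper bound} (a strategy for \PT holding \PO below $\tfrac{1}{n+1}$), I would have \PT play a mixed bid in the first round, of the flavor used in Lemma~\ref{lem:surely-lower}: put an atom at $0$ and spread the rest over an interval (or a finite set) of small bids, chosen so that (i) whenever \PO wins the first bidding with a ``large'' bid, he has spent so much that $(B_1 - b_1)/(1-b_2) < 1$ and hence \PT surely wins the second bidding, contributing $0$ to \PO's payoff; (ii) whenever \PO bids ``small,'' \PT overbids him with probability bounded away from — no, rather, \PO wins the first bidding with small probability; and (iii) even conditioned on \PO winning with a small bid, the residual one-shot value is controlled. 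Balancing the atom size and the spread so that the total \PO payoff is $\le \tfrac{1}{n+1}$, with the boundary $B_1 = 1+\tfrac1n$ being tight, is the crux. Concretely I expect \PT to bid $0$ with probability $\tfrac{n}{n+1}$ and otherwise bid uniformly (or from a finite grid) in a window whose width is tuned to $B_1$; the reverse-quantification trick from Lemma~\ref{lem:surely-lower} (``\PO reveals first, \PT responds by overbidding or by bidding $0$, both positive probability'') lets one reduce to checking \PO's pure first bids.

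The main obstacle I anticipate is the upper bound: unlike the surely-winning setting, here we need a \emph{tight} quantitative bound, so the crude ``overbid by $\epsilon/|V|$'' argument is not enough — the continuous support of \PT's bid distribution and the interaction between the first-round payment and the second-round one-shot value must be balanced exactly, and one must confirm the bound is simultaneously tight at the left endpoint so that it degrades to $\tfrac1{n+2}$ precisely when $B_1$ drops to $1+\tfrac1{n+1}$. I would handle this by writing $\PO$'s payoff against the proposed \PT strategy as an explicit function of \PO's pure first bid $b_1$ (an integral over \PT's bid distribution of the indicator-times-one-shot-value), then maximizing over $b_1\in[0,B_1]$ and checking the max equals $\tfrac1{n+1}$; the parameters of \PT's distribution are reverse-engineered precisely to make this maximum flat across the relevant range of $b_1$.
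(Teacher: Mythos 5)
Your high-level skeleton (finite-support mixed first-round bids for both players, matching lower and upper bounds, reduction to a one-step recursion) is the paper's approach, and your lower-bound sketch is essentially right: the paper has \PO bid uniformly on an arithmetic grid $\set{k/n : 1 \leq k \leq n}$ chosen so that every \PT bid is ``covered'' by exactly one support point that both wins the first bidding and leaves ratio above $1$. But there are two concrete problems. First, the base fact you flag as uncertain is resolved the wrong way: the value of the \emph{last} bidding with budgets $(C_1,C_2)$ is not the classical all-pay-auction equilibrium payoff $1-\tfrac{C_2}{2C_1}$ (that formula is for the model where the payment enters the utility). Here only the winner of the bidding matters and budgets are hard caps with no residual value, so the player with the larger budget simply bids it all and wins surely; with ties to \PO the continuation value is $1$ if $C_1 \geq C_2$ and $0$ otherwise. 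This $0$/$1$ continuation is what makes the whole problem combinatorial (``\PO wins iff he wins the first bidding \emph{and} his surviving ratio is at least $1$'') and is the source of the step-function answer; with your candidate formula the computation would not produce $\tfrac{1}{n+1}$.

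Second, your proposed upper-bound strategy for \PT --- an atom of mass $\tfrac{n}{n+1}$ at $0$ plus a small spread, imported from Lemma~\ref{lem:surely-lower} --- fails immediately: since \PO wins ties, \PO can respond by bidding $0$, which defeats \PT's $0$-atom while leaving \PO his entire budget $B_1>1$, so \PO wins with probability at least $\tfrac{n}{n+1}$, not at most $\tfrac{1}{n+1}$. The correct shape is the mirror image of the lower-bound strategy: \PT bids uniformly (each point with probability $\tfrac{1}{n+1}$ in the theorem's indexing) on an arithmetic progression $\set{k\cdot(\tfrac1n+\epsilon') : 0 \leq k \leq n}$ whose spacing $\tfrac1n+\epsilon'$ slightly exceeds $B_1-1$. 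Then a fixed \PO bid $b_1$ ``effectively wins'' (wins the first round and keeps ratio $\geq 1$) only against \PT bids in the interval $[b_1-(B_1-1),\,b_1]$, which has length less than the spacing and hence contains at most one support point; so \PO wins with probability at most $\tfrac{1}{n+1}$. Your reverse-quantification reduction to pure \PO bids is fine, but the distribution you would be verifying is the wrong one, and no tuning of an atom-plus-window of that shape can work against the tie rule.
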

\begin{proof}
The cases when $B_1 > 2$ and $B_1 < 1$ are easy. Let $2 \geq n \in \Nat$ such that $B_1 = 1+1/n + \epsilon$, where $\epsilon$ is such that $B_1 < 1+1/(n-1)$. We claim that the value of $\WnR(2)$ with initial budgets $B_1$ and $B_2 = 1$ is $1/n$. Consider the \PO strategy that choses a bid in $\set{k/n : 1 \leq k \leq n}$ uniformly at random. We claim that no matter how \PT bids, one of the choices wins, thus the strategy guarantees winning with probability at least $1/n$. Let $b_2$ be a \PT bid and let $k \in \Nat$ be such that $b_2 \in [k/n,(k+1)/n]$. Consider the case where \PO bids $b_1 = (k+1)/n$ and wins the first bidding. \PO's normalized budget in the second bidding is $B'_1 = \frac{B_1 - b_1}{1-b_2} \geq \frac{n+1/n + \epsilon - (k+1)/n}{1-(n-k)/n} > 1$, thus \PO wins the second bidding as well. Next, we show that \PO's strategy is optimal by showing a \PT strategy that guarantees winning with probability at least $(n-1)/n$. Let $\epsilon' > \epsilon$ such that $(n-1) \cdot \epsilon' < 1$, which exists since $B_1 < 1+1/(n-1)$. \PT chooses a bid uniformly at random in $\set{k \cdot (1/n + \epsilon'): 0 \leq k \leq n-1}$. Suppose \PO bids $b_1$, and we claim that the bid wins against at most one choice of \PT. Let $b_2$ be a \PT bid. When $b_2 > b_1$, \PT wins immediately. A simple calculation reveals that when $b_1 - b_2 > 1/n + \epsilon$, then \PO's normalized budget in the second bidding is less than $1$, thus he loses. It is not hard to see that there are $n-1$ choices for \PT that guarantee winning.
\end{proof}

\begin{remark}
\label{rem:tie}
Note that the tie-breaking mechanism affects the winner at the end-points of the intervals. For example, if we would have let \PT win ties, then the intervals would have changed to $[1+\frac{1}{n+1}, 1+\frac{1}{n})$.
\end{remark}

\subsection{Infinite support is required in $\WnR(3)$}
We continue to show a negative result already in $\WnR(3)$: there is an initial \PO budget for which his optimal strategy requires infinite support, which comes as a contrast to the optimal strategies we develop for $\WnR(2)$, which all have finite support.

\paragraph{A sweeping algorithm.}
In order to develop intuition for the proof, we present a sketch of an algorithm that decides, given $n \in \Nat$ and $B_1,v \in \Qpos$, whether \PO can guarantee winning with probability $v$ in $\WnR(n)$ when his initial budget is $B_1$ and \PT's budget is $1$. We assume access to a function $val_{n-1}$ that, given a budget $B_1$ for \PO, returns $val^\downarrow(B_1)$ in $\WnR(n-1)$. For example, Theorem~\ref{thm:2-1} shows that $val_2(1+1/m+\epsilon) = 1/m$, for $m \in \Nat$. The algorithm constructs a sequence of strategies $f_1, f_2, \ldots$ for \PO for the first bidding, each with finite support. We define $f_1(1) = v$ and $f_1(0) = 1-v$. That is, according to $f_1$, in the first bidding, \PO bids $1$ with probability $v$ and $0$ with probability $1-v$. Note that $out(f_1, 1) = v$. For $i \geq 1$, suppose $f_i$ is defined, that its support is $1=b_1 > \ldots > b_i> 0$, and that $out(f_i, b) \geq v$, for any \PT bid $b \geq b_i$. Intuitively, as \PT lowers his bid, his remaining budget for the subsequent biddings increases and the winning probability for \PO decreases. We ``sweep down'' from $b_i$ and find the maximal bid $b_{i+1}$ of \PT such that $out(f_i, b_{i+1}) < v$. We define $f_{i+1}$ by, intuitively, shifting some of the probability mass that $f_i$ assigns to $0$ to $b_{i+1}$, thus $supp(f_{i+1}) = supp(f_i) \cup  \set{b_{i+1}}$ and $f_i(b_j) = f_{i+1}(b_j)$, for $1 \leq j \leq i$. 
We terminate in two cases. If there is no positive $b$ such that $out(f_i, b) < v$, then $f_i$ guarantees a value of $v$, and we return it. On the other hand, if there is no $f_i(b) \in [0,1]$ that guarantees $out(f_i, b) = v$, then there $val(B_1) < v$. For example, in the game $\WnR(2)$ with $B_1 = 1\frac{1}{3}$ and $v = 1/3$, we define $f_1(1) = 1/3$. We find $b_2$ by solving $(B_1 - b_1)/(1 - b_2)=1$ to obtain $b_2 = 2/3$, and define $f_2(b_2) = 1/3$. Similarly, we find $b_3$ by solving $(B_1 - 2/3)/(1-b_3) = 1$, and terminate.

\stam{
In order to develop intuition for the proof, we present a sketch of an algorithm that decides, given $n \in \Nat$ and $B_1,v \in \Qpos$, whether \PO can guarantee winning with probability $v$ in $\WnR(n)$ when his initial budget is $B_1$ and \PT's budget is $1$. We assume access to a function $val_{n-1}$ that, given a budget $B_1$ for \PO, returns the probability with which he can guarantee winning in $\WnR(n-1)$ when the initial budgets are $B_1$ and $1$. For example, Theorem~\ref{thm:2-1} shows that $val_2(1+1/m+\epsilon) = 1/m$, for $m \in \Nat$. 

The algorithm constructs a sequence of strategies $f_1, f_2, \ldots$ for \PO for the first bidding, each with finite support. We define $f_1(1) = v$ and $f_1(0) = 1-v$. That is, according to $f_1$, in the first bidding, \PO bids $1$ with probability $v$ and $0$ with probability $1-v$. Clearly, the strategy $f_1$ guarantees a value of $v$ against the \PT strategy that bids $1$. For $i \geq 1$, suppose $f_i$ is defined, that its support is $1=b_1 > \ldots > b_i> 0$, and that $f_i$ guarantees winning with probability at least $v$ for any \PT bid $b \geq b_i$. As \PT lowers his bid, his remaining budget for the subsequent biddings increases and the winning probability for \PO decreases. We ``sweep down'' from $b_i$ and find the maximal bid $b_{i+1}$ of \PT such that $out(f_i, b_{i+1}) < v$. We define $f_{i+1}$ by, intuitively, shifting some of the probability mass that $f_i$ assigns to $0$ to $b_{i+1}$, thus $supp(f_{i+1}) = supp(f_i) \cup  \set{b_{i+1}}$ and $f_i(b_j) = f_{i+1}(b_j)$, for $1 \leq j \leq i$. 
We terminate in two cases. If there is no positive $b$ such that $out(f_i, b) < v$, then $f_i$ guarantees winning with probability at least $v$ for every bid of \PT, and we return it. On the other hand, if there is no $f_i(b) \in [0,1]$ that guarantees $out(f_i, b) = v$, then there is no \PO strategy that guarantees winning with probability at least $v$.

For example, we execute the algorithm on the game $\WnR(2)$ with an initial budget $B_1 = 1\frac{1}{3}$ and $v = 1/3$. We define $f_1(1) = 1/3$. Note that $val_1(B) = 1$ if $B \geq 1$ and $val_1(B) = 0$ if $B< 1$. Thus, in order to find $b_2$, we solve $1=(B_1 - b_1)/(B_2 - b_2) = (1\frac{1}{3} - 1)/(1-b_2) $ to obtain $b_2 = 2/3$, and define $f_2(b_2) = 1/3$. Similarly, we solve $(1\frac{1}{3} - 2/3)/(1-b_3) = 1$ to obtain $b_3 = 1/3$ and set $f_3(b_3) = 1/3$.
}

\paragraph{Infinite support is required.}
The following lemma, shows a condition for the in-optimality of a strategy. Intuitively, consider a \PO strategy $f$ that is obtained as in the sweeping algorithm only that in the $(i+1)$-th iteration, instead of adding $x < b_i$ to the support, it adds $b_{i+1} < x$. The value that $f$ guarantees needs to be at least $v$ for any \PT bid, and specifically for $b_i$ and $x$. Note that it is clearly beneficial for \PO to bid $x$ rather than $b_i$ against the bid $x$ of \PT. Since $x$ is not in the support of $f$, the probability $f(b_i)$ is un-necessarily high to ``compensate''. We construct $f'$ by shifting some of this probability to $x$, we are left with a ``surplus'' which we re-distribute to the rest of the support of $f$, thus $f'$ guarantees a better value than $f$. Formally, we have the following.

\begin{lemma}
\label{lem:non-opt}
Consider a \PO strategy $f$ in $\WnR(n)$, for some $n\in \Nat$, that has finite support $b_1 > \ldots> b_m$ in the first bidding. If there is $1 \leq i < m$, $1 \leq k \leq i$, and $b_i > x > b_{i+1}$ with $out(b_k, b_i) > out(b_k, x)$ and $out(x,x) = out(b_i, b_i)$, then $f$ is not optimal.
\end{lemma}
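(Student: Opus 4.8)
The plan is to construct, directly from $f$, a \PO strategy $f'$ whose guaranteed value $\inf_{b} out(f',b)$ strictly exceeds $v^\ast := \inf_b out(f,b)$; since $val^\downarrow(B_1) = \sup_{f'} \inf_b out(f',b)$ in $\WnR(n)$, this shows $f$ is not optimal. Recall that for a single \PT bid $b$ we have $out(f,b) = \sum_{b' \in supp(f),\, b' \geq b} f(b')\, out(b',b)$, where $out(b',b) = val_{n-1}\big((B_1-b')/(1-b)\big)$ when $b' \geq b$ (so \PO wins the first bidding and \PT's remaining budget is $1-b$) and $out(b',b)=0$ otherwise. The one elementary fact I would use repeatedly is that $val_{n-1}$ is non-decreasing in \PO's normalized budget, so $out(b',b)$ is non-decreasing in $b$ on $\{b\le b'\}$: when \PT loses the first bidding, a higher \PT bid only burns \PT's budget and so only helps \PO. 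From this and the two hypotheses I would first record: (i) for every $l\le i$, $out(b_l,b_i)\ge out(b_l,x)$, strict at $l=k$; and since $b_{i+1}<x<b_i$, the \PT bids $b_i$ and $x$ are beaten by exactly the same \PO bids $b_1,\dots,b_i$, whence $\Delta:=out(f,b_i)-out(f,x)\ge f(b_k)\big(out(b_k,b_i)-out(b_k,x)\big)>0$; and (ii) $out(b_i,x)\le out(b_i,b_i)=out(x,x)$, so moving \PO probability from $b_i$ onto $x$ never hurts against a \PT bid $\le x$, and costs at most $out(x,x)$ per unit of probability against a \PT bid in $(x,b_i]$ (there $out(b_i,b)\le out(b_i,b_i)=out(x,x)$).

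Next I would define $f'$ by taking a small mass $\delta>0$ off $b_i$, placing it on the previously unused bid $x$, and returning a yet smaller ``surplus'' portion onto the support points lying above $x$ --- the redistribution that the strictly positive gap $\Delta$ makes affordable. I would then verify $out(f',b)\ge v^\ast$ for every \PT bid $b$, with strict inequality at every $b$ realizing or approaching the infimum, by cases. For $b\le x$: shifting probability from $b_i$ down to $x$ weakly raises $out(\cdot,b)$ (same monotonicity as in (i)), and the returned surplus lands only on bids $\ge x\ge b$, which again can only help; a strict gain appears at the responses where $val_{n-1}$ actually separates the relevant budgets. For $b>b_i$: both $b_i$ and $x$ lose, so only the returned mass (on bids $\ge b$) can change $out(\cdot,b)$, and only upward. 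The delicate region is $b\in(x,b_i]$, where the primary move costs at most $\delta\cdot out(x,x)$; there I would argue the loss is paid for by the gap $\Delta$ that $f$ leaves against $b_i$ (so for $\delta$ below $\Delta/out(x,x)$ the constraint at $b_i$ itself survives) and, for the bids strictly between, by the returned surplus together with the fact that $out(f,\cdot)$ is non-decreasing on $(b_{i+1},b_i]$ (so the worst response in that interval sits at its lower end, where the primary move in fact helps); one also checks the single new ``right-limit'' response just above $x$. Finally, since $out(f,\cdot)$ is piecewise monotone with infimum approached only at the finitely many right-limits of the support points (and the boundary bid), and the whole perturbation is uniformly $O(\delta)$, a sufficiently small $\delta$ makes $\inf_b out(f',b)>v^\ast$.

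The main obstacle is exactly the interval $(x,b_i]$: it is precisely the set of \PT bids that a \PO bid of $x$ loses but a \PO bid of $b_i$ wins, so naively relocating probability from $b_i$ to $x$ strictly hurts \PO there, and that loss must be accounted for. This is where the two hypotheses earn their place: $out(x,x)=out(b_i,b_i)$ both caps the per-unit loss on $(x,b_i]$ by $out(x,x)$ and certifies that a \PO bid of $x$ is as good as one of $b_i$ against a \PT bid of $x$ (so the shift is a genuine efficiency gain below $x$, not a mere relocation), while $out(b_k,b_i)>out(b_k,x)$ supplies the strictly positive surplus $\Delta$ against the \PT bid $b_i$ that funds the correction. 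Turning this into explicit choices of $\delta$ and of the redistributed mass, and verifying that \emph{every} \PT response --- including the continuum inside $(x,b_i]$ and the right-limit responses at the support points --- ends strictly above $v^\ast$, is the bulk of the work; everything else reduces to the monotonicity of $val_{n-1}$ in \PO's budget.
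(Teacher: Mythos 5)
Your overall plan is the same as the paper's: use the monotonicity of $out(b_j,\cdot)$ in \PT's bid to extract the strict gap $\Delta=out(f,b_i)-out(f,x)>0$ from the hypothesis $out(b_k,b_i)>out(b_k,x)$, shift mass from $b_i$ down to $x$ (made ``efficient'' by $out(x,x)=out(b_i,b_i)$), and redistribute the resulting surplus. You are in fact more explicit than the paper about where the difficulty sits, namely \PT bids in $(x,b_i]$. But the one step you flag and do not carry out --- ``the single new right-limit response just above $x$'' --- is exactly where the argument as assembled fails, and it is not a routine check. For \PT bids $y\downarrow x$ the relocated mass at $x$ contributes nothing (it loses to every $y>x$), the mass removed from $b_i$ did contribute $\delta\cdot out(b_i,y)$, and the redistributed surplus lands on bids $b_j\ge b_i$ whose $out(b_j,y)$ may well be $0$ even when $out(b_i,y)>0$ (their normalized continuation budgets are smaller, and $val_{n-1}$ can vanish there). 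So $out(f',y)$ can be as small as $out(f,y)-\delta\,out(b_i,y)$ near $x$, and the only slack available there is $out(f,x^+)\ge out(f,x)\ge v^\ast$, which may hold with equality; in that case $out(f',y)<v^\ast$ for \emph{every} $\delta>0$, so ``sufficiently small $\delta$'' does not rescue the infimum. Your assertion that the worst response in the interval ``sits at its lower end, where the primary move in fact helps'' is false for the sub-interval $(x,b_i]$: its lower end is precisely this hostile right-limit, and the slack $\Delta$ you do control lives at $b_i$, not at $x^{+}$.

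The missing ingredient is the normalization the paper performs before perturbing: replace $x$ by the largest $x'\in[x,b_i)$ at which some $out(b_{k'},\cdot)$, $k'\le i$, still sits strictly below its value at $b_i$. The hypotheses persist for $(k',x')$ (in particular $out(y,y)$ is non-decreasing in $y$, so $out(x',x')=out(b_i,b_i)$ is preserved), and now $out(b_j,y)=out(b_j,b_i)$ for every $j\le i$ and every $y\in(x',b_i]$. Consequently $out(f,y)=out(f,b_i)\ge v^\ast+\Delta$ on the entire dangerous interval $(x',b_i]$, i.e., the whole region inherits the slack $\Delta$ that pays for the $\delta$ removed from $b_i$. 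Without some device of this kind your case analysis does not close.
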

\begin{proof}
Suppose that the strategy $f$ guarantees a value of $v$ and let $x \in \Real$ be as in the statement. The value is at least $v$ against any \PT strategy, and in particular against the bids $b_i$ and $x$, thus $out(f, x) \geq v$. We claim that $out(f, x) < out(f, b_i)$, thus $out(f, b_i) > v$. Recall that the definition of the outcome is $out(f, b_i) = \sum_{1 \leq j \leq i} f(b_j) \cdot out(b_j, b_i)$ and similarly for $out(f, x)$. Fixing a \PO bid, the function $out$ is monotonically decreasing with \PT's bids since the lower \PT bids, the more budget he has left for subsequent biddings. Thus, for $1 \leq j \leq i$, we have $out(b_j, b_i) \geq out(b_j, x)$, and the assumptions of the lemma imply that at least one of the inequalities is strict. Thus, the claim follows and we have $out(f, b_i) > v$.

Next, we construct a ``partial-strategy'' $f'$ by adding $x$ to the support. Formally, we define $f'(b_j) = f(b_j)$, for $1 \leq j < i$, and let $f'(b_i)$ be such that $\sum_{1 \leq j \leq i} f'(b_j) \cdot out(b_j, b_i) = v$. Let $b_{i+1} = x$, and let $f'(b_{i+1})$ be such that $\sum_{1 \leq j \leq i+1} f'(b_j) \cdot out(b_j, b_i) = v$. We claim that, intuitively, we are left with a ``surplus'', and formally, we have $f'(b_i) + f'(b_{i+1}) < f(b_i)$. As in the above, we have $out(f, b_i) = \sum_{1 \leq j \leq i} f(b_j) \cdot out(b_j, b_i) > v$. Suppose we chose the maximal $b_i > x > b_{i+1}$ for which the statement holds, thus for every $j \neq k$, we have $out(b_j, x) = out(b_j, b_i)$, and $out(b_k, x) = out(b_k, b_i) - c$, for some $c >0$. We subtract the equality $\sum_{1 \leq j \leq i+1} f'(b_j) \cdot out(b_j, b_i) = v$ from $out(f, b_i)  > v$ and plug in the equalities to obtain $f'(b_i) \cdot out(b_i, b_i) + f'(b_{i+1}) \cdot out(x, x) < f(b_i) \cdot out(b_i, b_i) - c \cdot f(b_k)$. Since we assume $out(x, x) = out(b_i,b_i)$, the claim follows. 

Let $\Delta = f(b_i) - (f'(b_i) + f'(b_{i+1}))$ denote our ``surplus'', which, by the above, is positive. We define a new \PO strategy $f''$ with support $supp(f) \cup \set{x}$. The probability of $b_j$, for $j \neq i$ is $f(b_j)+ \Delta/|supp(f'')|$. The probability of $b_i$ is $f'(b_i) + \Delta/|supp(f'')|$ and the probability of $x$ is $f'(b_{i+1}) +\Delta/|supp(f'')|$. It is not hard to show that $f''$ guarantees a value that is greater than $v$, thus $f$ is not optimal. 
\end{proof}

Next, we use Lemma~\ref{lem:non-opt} to show that any strategy with finite support is not optimal.

\begin{theorem}
\label{thm:3-1}
Consider the game $\WnR(3)$ in which \PO needs to win three biddings and \PT needs to win once. Suppose the initial budgets are $1.25$ for \PO and $1$ for \PT. Then, an optimal strategy for \PO requires infinite support in the first bidding.
\end{theorem}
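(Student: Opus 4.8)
The plan is to apply Lemma~\ref{lem:non-opt} iteratively to rule out every finite-support strategy. Concretely, I would argue by contradiction: suppose \PO has an optimal strategy $f$ for the first bidding with finite support $b_1 > \ldots > b_m$, guaranteeing value $v = val^\downarrow(1.25)$ in $\WnR(3)$. The key input needed is the outcome function $out(b_1, b_2)$ for $\WnR(3)$, which, using the sweeping-algorithm notation, is $val_2\big((1.25 - b_1)/(1 - b_2)\big)$ when $b_1 > b_2$ (and $0$ when $b_1 \le b_2$, since \PT wins ties). Here $val_2$ is the explicit step function from Theorem~\ref{thm:2-1}: $val_2(1 + 1/k + \delta) = 1/k$ on the relevant intervals. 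So $out$ as a function of \PO's own winning bid is itself a step function, constant on intervals and jumping at the thresholds $b$ for which $(1.25 - b)/(1 - b_2)$ crosses a value of the form $1 + 1/k$.

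The heart of the argument is to show that for a finite-support $f$ one can always find the configuration required by Lemma~\ref{lem:non-opt}: indices $i < m$, $k \le i$, and a bid $x \in (b_{i+1}, b_i)$ with $out(b_k, b_i) > out(b_k, x)$ but $out(x,x) = out(b_i, b_i)$. The way I would produce this is to look at the largest gap in the support — an interval $(b_{i+1}, b_i)$ containing no support point — and use the step-function structure of $out$ to locate an $x$ in this gap that lies in the same ``$val_2$-plateau'' as $b_i$ when bid against $b_i$ itself (so $out(x,x) = out(b_i,b_i)$), yet, when \PT bids the larger value $b_i$, some smaller support bid $b_k$ does strictly better against $b_i$ than against $x$ because the ratio $(1.25 - b_k)/(1 - x)$ versus $(1.25 - b_k)/(1 - b_i)$ straddles a jump point of $val_2$. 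Since $\WnR(3)$ only has history length $3$, all of these outcomes are governed by the two-step game, so everything reduces to arithmetic on the intervals $(1 + 1/(k+1), 1 + 1/k]$. I expect that with $B_1 = 1.25$ the relevant thresholds accumulate toward $1$, so no finite set of support points can ``cover'' all the plateaus, which is exactly why the obstruction of Lemma~\ref{lem:non-opt} is always present.

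The main obstacle — and where the real work lies — is verifying that the hypotheses $out(b_k, b_i) > out(b_k, x)$ and $out(x,x) = out(b_i,b_i)$ can be \emph{simultaneously} met for some admissible $x$ in the gap, for \emph{every} finite support. The danger is a finite support whose points are placed exactly at the plateau boundaries so cleverly that no strict inequality $out(b_k, b_i) > out(b_k, x)$ is available. I would handle this by a density/accumulation argument: the jump points of $b \mapsto val_2\big((1.25 - b_k)/(1-b)\big)$ form an infinite sequence converging to some limit (the point where the normalized budget hits $1$), so any gap $(b_{i+1}, b_i)$ of positive length that survives in a finite support must contain infinitely many such jump points for suitable $k$, and in particular one can slide $x$ just past a jump to get the strict inequality while staying inside $b_i$'s own plateau. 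Once that placement is secured, Lemma~\ref{lem:non-opt} immediately contradicts optimality of $f$, so no finite-support optimal strategy exists — and since a value does exist (it can be obtained as a limit, analogously to the $\G_n$ construction used for threshold functions, or by a compactness argument on mixed strategies), the optimal behavior must use infinite support. I would close by remarking that this stands in contrast to Theorem~\ref{thm:2-1}, where $val_1$ is a single-jump function and hence only finitely many plateaus ever arise.
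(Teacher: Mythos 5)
Your overall architecture is the same as the paper's: assume a finite-support optimal $f$ for the first bidding, view $out(b_1,b_2)=val_2\bigl((5/4-b_1)/(1-b_2)\bigr)$ as a step function via Theorem~\ref{thm:2-1}, and use an accumulation-plus-pigeonhole argument to produce the configuration demanded by Lemma~\ref{lem:non-opt}. But the step you yourself flag as ``where the real work lies'' is precisely the step you do not carry out, and the density claim you substitute for it is not correct as stated. For a fixed \PO bid $b_k$, the jump points of $b\mapsto val_2\bigl((5/4-b_k)/(1-b)\bigr)$ accumulate only at the single point $b=b_k-1/4$; there are finitely many support bids, hence finitely many such accumulation points, so it is simply false that \emph{every} positive-length gap of the support contains infinitely many jump points, and nothing forces the ``largest gap'' to contain any. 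Moreover you must satisfy two constraints simultaneously --- $out(x,x)=out(b_i,b_i)$ (same plateau of the diagonal, whose jumps sit only at $1/4,1/2,3/4$) and $out(b_k,b_i)>out(b_k,x)$ (a jump of $out(b_k,\cdot)$ strictly between $x$ and $b_i$) --- in a gap chosen by the adversary, for \emph{every} finite support. A vague appeal to density does not deliver this; an explicit witness is needed. (A small side remark: in Lemma~\ref{lem:non-opt} the index $k\le i$ means $b_k\ge b_i$, so the auxiliary bid is a \emph{larger} support bid, not a smaller one as you write; in fact the case $k=i$ suffices.)

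The paper closes the gap with a concrete chained construction. Set $x_n=7/8-\frac{1}{8}\sum_{j=1}^{n-1}2^{-j}$, a strictly decreasing sequence with $x_n\downarrow 3/4$ chosen so that $(5/4-x_n)/(1-x_{n+1})=2$ for all $n$; hence $out(x_n,x_{n+1})=1/2$ by Theorem~\ref{thm:2-1}, while $out(y,y)=1$ for every $y\in(3/4,1)$ because there $(5/4-y)/(1-y)>2$. Since the $x_n$ are infinitely many and accumulate at $3/4$, any finite support leaves two consecutive terms $x_k>x_{k+1}$ inside one gap, with $b_i\ge x_k>x_{k+1}>b_{i+1}$; monotonicity of $out$ in \PO's bid then gives $out(b_i,x_{k+1})\le out(x_k,x_{k+1})=1/2<1=out(b_i,b_i)=out(x_{k+1},x_{k+1})$, which is exactly the hypothesis of Lemma~\ref{lem:non-opt} with $x=x_{k+1}$ and the strict inequality taken at $k=i$. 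The self-referential choice --- each $x_{n+1}$ is the \PT bid against which the \PO bid $x_n$ leaves normalized budget exactly $2$ --- is what lets consecutive terms of a single sequence serve simultaneously as the ``$b_i$'' and the ``$x$'' of the lemma. So: right strategy and right lemma, but the essential witness construction is missing, and the argument offered in its place does not substitute for it.
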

\begin{proof}
Suppose towards contradiction that $f$ is an optimal strategy with finite support in the first bidding. Consider the infinite sequence $x_n = 7/8 - 1/8 \cdot \sum_{1 \leq j \leq n-1} 2^{-j}$, for $n \geq 1$. It is not hard to verify that $(5/4 - x_n)/(1-x_{n+1}) = 2$, for every $n \geq 1$. That is, when \PO bids $x_n$ and \PT bids $x_{n+1}$, by Theorem~\ref{thm:2-1}, the value is $0.5$.

Let $supp(f) = \set{b_1,\ldots,b_m}$ be the support of $f$ in the first bidding, where $b_1 > b_2 > \ldots > b_m$. Since the support is finite, there is an $1 \leq i \leq m$ and $k \geq 1$ such that $b_i \geq x_k > x_{k+1} > b_{i+1}$. We claim that $x_{k+1}$ satisfies the assumptions of Lemma~\ref{lem:non-opt}, thus $f$ is not optimal. First, it is not hard to verify that for $0.75 < y < 1$, we have $(5/4-y)/(1-y) > 2$. Since the sequence $\set{x_n}_{n\geq 1}$ tends to $0.75$, we have $out(b_i, b_i) = out(x_{k+1}, x_{k+1}) = 1$. Second, when \PT's bid is fixed, then $out$ is monotonically decreasing with \PO's bid, thus $out(b_i, x_{k+1}) \leq out(x_k, x_{k+1}) = 0.5$. Thus, $out(b_i, b_i) > out(b_i, x_{k+1})$, and we are done.
\end{proof}

\section{Experiments}
\label{sec:exp}
\begin{figure}
\center
\includegraphics[height=5.5cm]{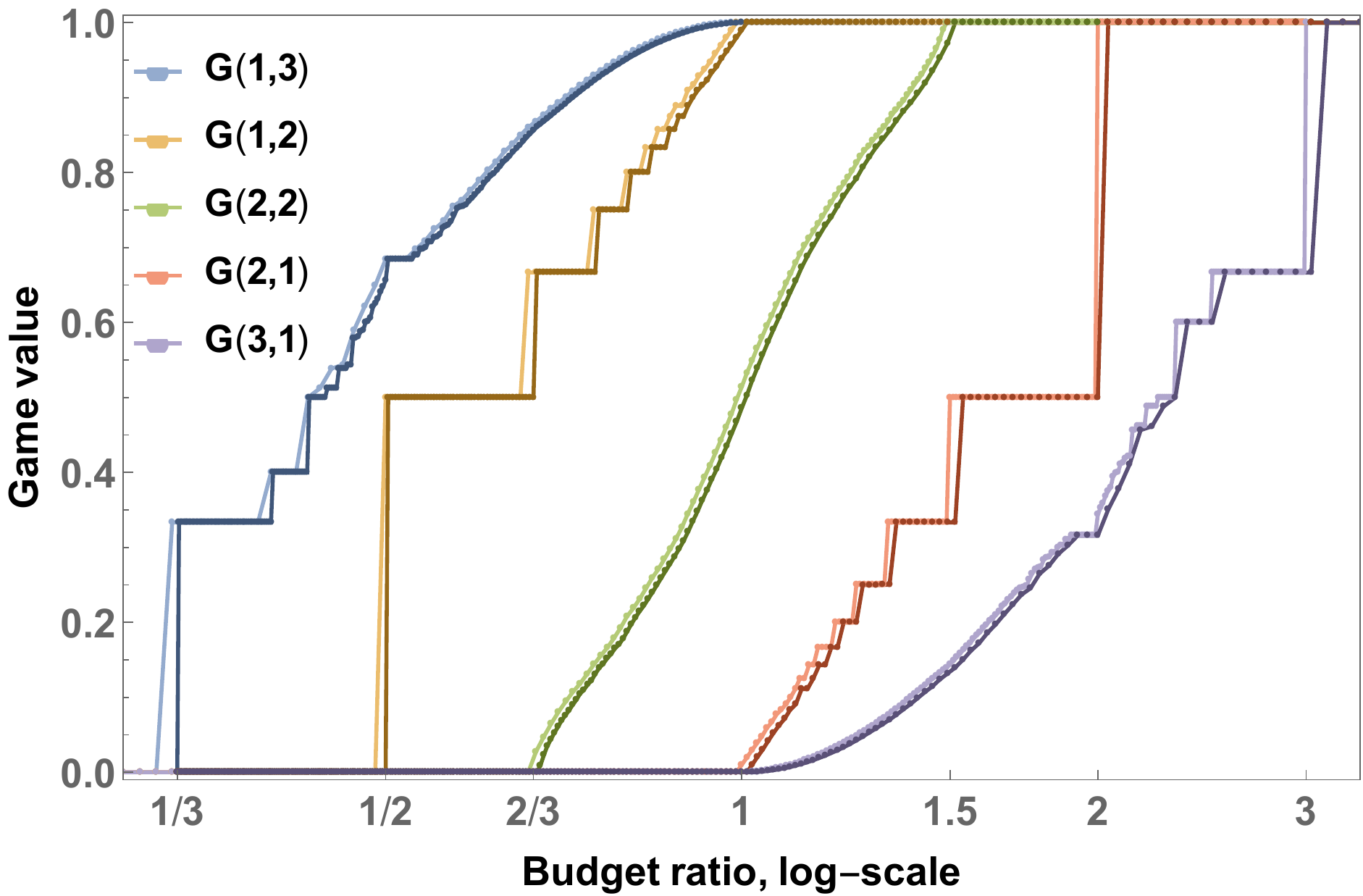}
\caption{Upper- and lower-bounds on the values of the games $G(1,3), G(1,2), G(2,2), G(2,1)$, and $G(3,1)$.} 
\label{fig:plot-1}
\end{figure}
We have implemented Algorithm~\ref{alg} and experiment by running it on qualitative games that are called a {\em race} in \cite{HV85}.
\begin{definition}[Race Games]
\label{def:race}
For $n, m \in \Nat$, let $G(n,m)$ be the qualitative game that consists of at most $n+m-1$ biddings in which \PO wins if he wins $n$ biddings and \PT wins if he wins $m$ biddings. Specifically, we have $\WnR(n) = G(n,1)$. 
\end{definition}
The algorithm is implemented in Python and it is run on a personal computer. In our experiments, we choose $\epsilon = 0.01$. In terms of scalability, the running time for for solving the game $G(5,5)$ is at most $10$ minutes. Solutions to smaller games are in the order of seconds to several minutes.

Figure~\ref{fig:plot-1} depicts some values for five games as output by the implementation. We make several observations. First, close plots represent the upper- and lower-bounds of a game. We find it surprising that the difference between the two approximations is very small, and conclude that the output of the algorithm is a good approximation of the real values of the games. Second, the plot of $G(2,1) = \WnR(2)$ (depicted in red), is an experimental affirmation of Theorem~\ref{thm:2-1}; namely, the step-wise behavior and the values that are observed in theory are clearly visible in the plot. Third, while a step-wise behavior can be seen for high initial budgets in $G(3,1)$ (depicted in purple), for lower initial budgets, the behavior seems more involved and possibly continuous. In Theorem~\ref{thm:3-1}, we show that optimal strategies for initial budgets in this region require infinite support, and the plot affirms the more involved behavior of the value function that is predicted by the theorem. Continuity is more evident in more elaborate games whose values are depicted in Figure~\ref{fig:plot-2}. Both plots give rise to several interesting open questions, which we elaborate on in the next section.
\begin{figure}
\center
\includegraphics[height=5.5cm]{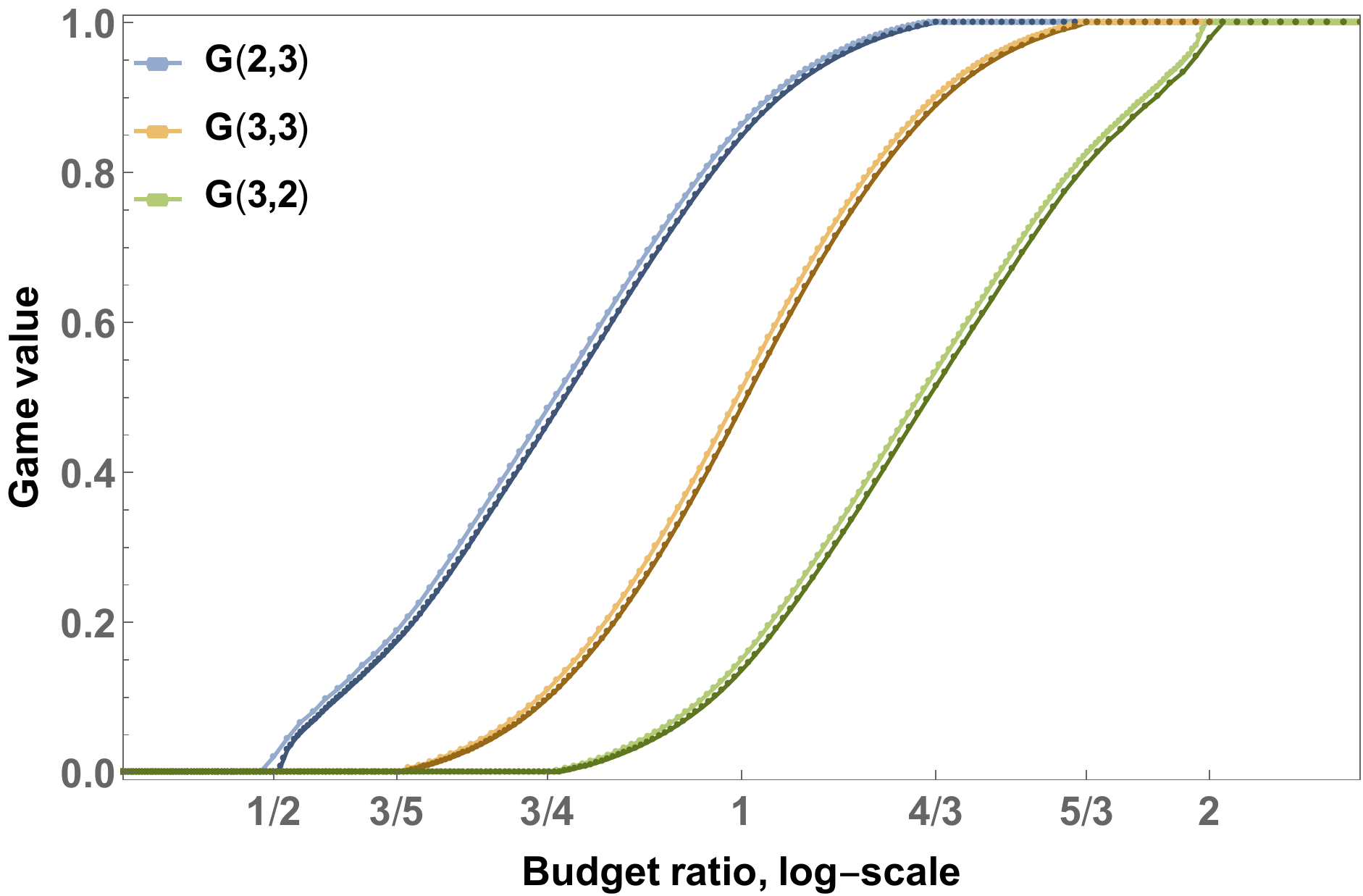}
\caption{Upper- and lower-bounds on the values of the games $G(3,2), G(3,3)$, and $G(2,3)$.} 
\label{fig:plot-2}
\end{figure}
\begin{figure}
\center
\includegraphics[height=5.5cm]{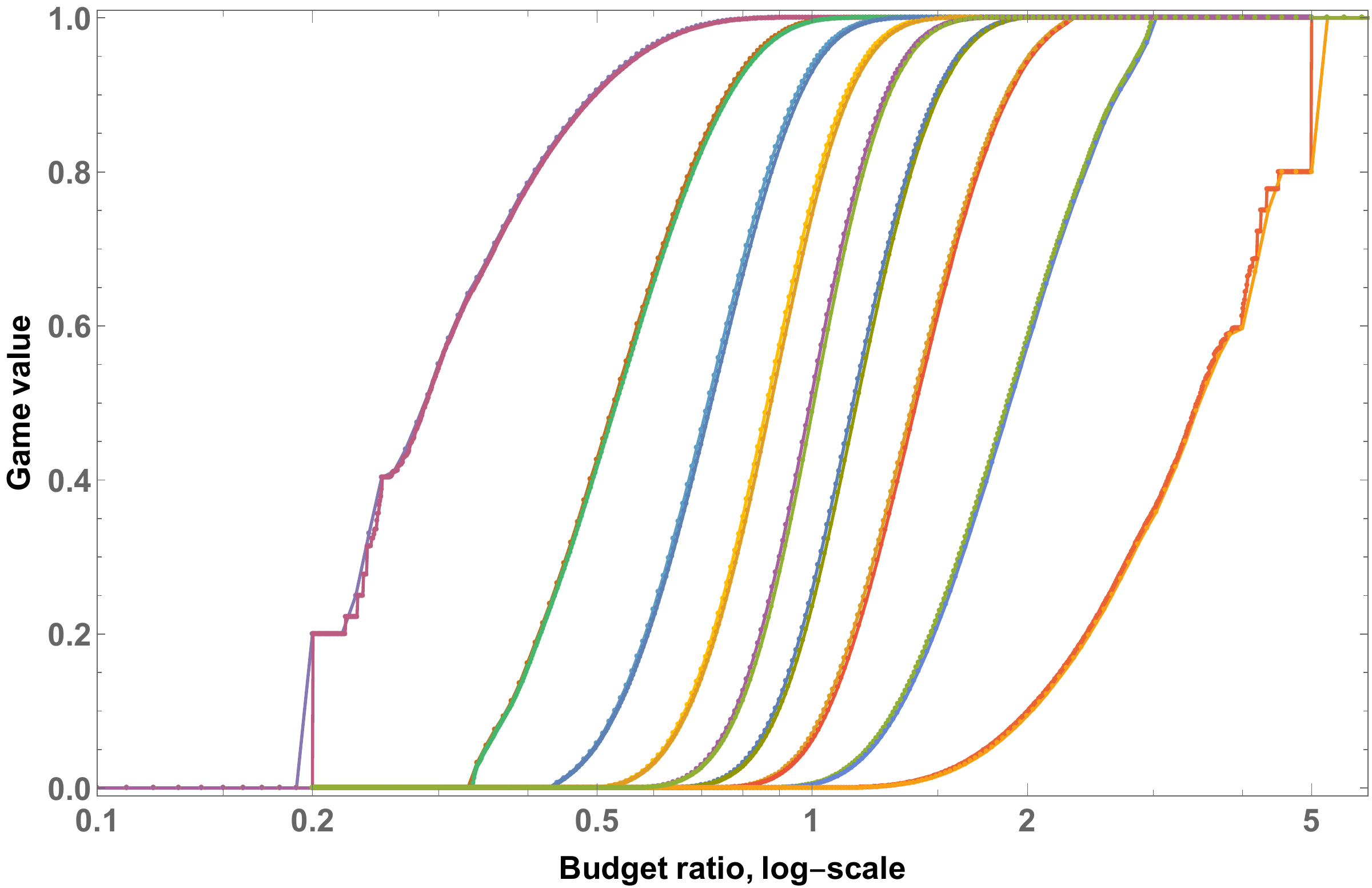}
\caption{Upper- and lower-bounds on the values of the games $G(5,i)$ and $G(i,5)$, for $1 \leq i \leq 5$.} 
\label{fig:plot-3}
\end{figure}

\section{Discussion}
We study, for the first time, all-pay bidding games on graphs. Unlike bidding games that use variants of first-price auctions, all-pay bidding games appropriately model decision-making settings in which bounded effort with no inherent value needs to be invested dynamically. While our negative results show that all-pay bidding games are significantly harder than first-price bidding games, our results are mostly positive. We are able to regain the threshold-ratio phenomena from first-price bidding games by considering surely-winning threshold ratios, and our implementation for games on DAGs solves non-trivial games such as tic-tac-toe. We show a simple FPTAS that finds upper- and lower-bounds on the values for every initial budget ratio, which we have implemented and show that it performs very well. 

We leave several open questions. The basic question on games, which we leave open, is showing the existence of a value with respect to every initial ratio in every game. We were able to show existence in $\WnR(2)$, and Fig.~\ref{fig:plot-2} hints the value exists in more complicated games. Also, while we identify the value function in $\WnR(2)$, we leave open the problem of a better understanding of this function in general. For example, while for $\WnR(2)$ we show that it is a step-wise function, observing Fig.~\ref{fig:plot-2} it seems safe to guess that the function can be continuous. Finally, characterizing the function completely, similar to our solution of $\WnR(2)$, in more involved games, is an interesting and challenging open problem.

\section{Acknowledments}
This research was supported by the Austrian Science Fund (FWF) under grants S11402-N23 (RiSE/SHiNE), Z211-N23 (Wittgenstein Award), and M 2369-N33 (Meitner fellowship).

\begin{small}
\bibliographystyle{aaai}
\bibliography{../ga}
\end{small}
\end{document}